%
%
\documentclass{llncs}


\usepackage{url}


\usepackage{amsmath}
\usepackage{amssymb}






\newcommand{\abs}[1]{\left\lvert#1\right\rvert}

\DeclareMathOperator{\Dom}{dom}
\newcommand{\rest}[2]{#1\!\!\restriction_{#2}}
\newcommand{\reste}[2]{#1\restriction_{#2}}

\newcommand{\N}{\mathbb{N}}
\newcommand{\Q}{\mathbb{Q}}
\newcommand{\R}{\mathbb{R}}
\newcommand{\X}{\{0,1\}^*}



\newcommand{\CS}{\Theta}
\newcommand{\CSb}{\overline{\Theta}}




\newcommand{\noi}{\noindent}


\pagestyle{plain}

\begin{document}


\begin{center}
{\Large \textbf{\boldmath
  A New Representation of Chaitin $\Omega$ Number Based on Compressible Strings
}}
\end{center}

\vspace{-2mm}

\begin{center}
Kohtaro Tadaki
\end{center}

\vspace{-5mm}

\begin{center}
Research and Development Initiative, Chuo University\\
JST CREST\\
1--13--27 Kasuga, Bunkyo-ku, Tokyo 112-8551, Japan\\
E-mail: tadaki@kc.chuo-u.ac.jp
\end{center}

\vspace{-2mm}

\begin{quotation}
\noi\textbf{Abstract.}
In 1975 Chaitin
introduced his $\Omega$ number
as a concrete example of random real.
The real $\Omega$ is defined
based on the set of all halting inputs
for an optimal prefix-free machine $U$,
which
is a universal decoding algorithm used
to define the notion of program-size complexity.
Chaitin showed $\Omega$ to be random
by discovering the property that
the first $n$ bits of the base-two expansion of $\Omega$ solve
the halting problem of $U$
for all binary inputs of length at most $n$.
In this paper,
we introduce a new representation $\CS$ of Chaitin $\Omega$ number.
The real $\CS$ is defined based on the set of all compressible strings.
We investigate the properties of $\CS$ and show that $\CS$ is random.
In addition,
we generalize $\CS$ to two directions $\CS(T)$ and $\CSb(T)$ with a real $T>0$.
We then study their properties.
In particular,
we show that
the computability of the real $\CS(T)$ gives a sufficient condition
for a real $T\in(0,1)$ to be a fixed point on partial randomness,
i.e., to satisfy the condition that the compression rate of $T$ equals to $T$.
\end{quotation}

\begin{quotation}
\noi\textit{Key words\/}:
algorithmic information theory,
Chaitin $\Omega$ number,
randomness,
partial randomness,
fixed point,
program-size complexity
\end{quotation}

\vspace*{-4mm}

\section{Introduction}

\vspace*{-2mm}

Algorithmic information theory (AIT, for short) is a framework
for applying
information-theoretic and probabilistic ideas to recursive function theory.
One of the primary concepts of AIT is the \textit{program-size complexity}
(or \textit{Kolmogorov complexity}) $H(s)$ of a finite binary string $s$,
which is defined as the length of the shortest binary
input
for a universal decoding algorithm $U$,
called an \textit{optimal prefix-free machine},
to output $s$.
By the definition, $H(s)$ is thought to represent
the amount of randomness contained in a finite binary string $s$,
which cannot be captured in an effective manner.
In particular,
the notion of program-size complexity plays a crucial role in
characterizing the \textit{randomness} of an infinite binary string,
or equivalently, a real.
In \cite{C75} Chaitin introduced the halting probability $\Omega$
as a concrete example of
random real.
His $\Omega$ is defined
based on the set of all halting inputs for $U$,
and plays a central role in the
metamathematical
development of
AIT \cite{C87b}.
The first $n$ bits of the base-two expansion of $\Omega$ solve
the halting problem of $U$ for inputs of length at most $n$.
Based on this property,
Chaitin showed that $\Omega$ is random.

In this paper, we introduce a new representation $\CS$ of Chaitin $\Omega$ number.
The real $\CS$ is defined based on the set of all compressible strings,
i.e., all finite binary strings $s$ such that $H(s)<\abs{s}$,
where $\abs{s}$ is the length of $s$.
The first $n$ bits of the base-two expansion of $\CS$ enables us
to calculate a random finite string of length $n$,
i.e., a finite binary string $s$ for which $\abs{s}=n$ and $\abs{s}\le H(s)$.
Based on this property,
we show that $\CS$ is random.

In the work \cite{T02}
we introduced the notion of \textit{partial randomness} for a real as
a stronger representation of the compression rate of a real
by means of program-size complexity.
At the same time,
we generalized
the
halting probability $\Omega$ to $Z(T)$
with a real $T$
so that,
for every $T\in(0,1]$, if $T$ is computable
then the partial randomness of the real $Z(T)$ exactly equals to $T$.%
\footnote{In
\cite{T02},
$Z(T)$ is denoted by $\Omega^T$.
}
In the case of $T=1$,
$Z(T)$ results in
$\Omega$, i.e., $Z(1)=\Omega$.
Later on,
in the work \cite{T08CiE}
we revealed a special significance of the computability of the value $Z(T)$.
Namely,
we proved \textit{the fixed point theorem on partial randomness},%
\footnote{
The fixed point theorem on partial randomness is called
a fixed point theorem on compression rate in \cite{T08CiE}.}
which states that, for every $T\in(0,1)$,
if $Z(T)$ is a computable real,
then the partial randomness of $T$ equals to $T$,
and therefore the compression rate of $T$ equals to $T$,
i.e.,
$\lim_{n\to\infty}H(\rest{T}{n})/n=T$,
where $\rest{T}{n}$ is the first $n$ bits of the base-two expansion of $T$.

In a similar manner to the generalization of $\Omega$ to $Z(T)$,
in this paper we generalize $\CS$ to two directions $\CS(T)$ and $\CSb(T)$.
We then show that the reals $\CS(T)$ and $\CSb(T)$ both have the same randomness properties
as $Z(T)$.
In particular,
we show that the fixed point theorem on partial randomness,
which has the same form as for $Z(T)$, holds for $\CS(T)$.

%
The paper is organized as follows.
We begin in Section~\ref{preliminaries} with
some preliminaries to AIT and partial randomness.
In Section~\ref{new}
we introduce $\CS$ and study its property.
Subsequently,
we generalize $\CS$ to two directions $\CS(T)$ and $\CSb(T)$
in Section~\ref{gen1} and Section~\ref{gen2}, respectively.
In Section~\ref{fptgen1},
we prove the fixed point theorem on partial randomness based on
the computability of
the value $\CS(T)$.

\vspace*{-3mm}

\section{Preliminaries}
\label{preliminaries}

\vspace*{-2mm}


We start with some notation about numbers and strings
which will be used in this paper.
$\#S$ is the cardinality of $S$ for any set $S$.
$\N=\left\{0,1,2,3,\dotsc\right\}$ is the set of natural numbers,
and $\N^+$ is the set of positive integers.
$\Q$ is the set of rationals, and
$\R$ is the set of reals.
%
A sequence $\{a_n\}_{n\in\N}$ of numbers
(rationals or reals)
is called \textit{increasing} if $a_{n+1}>a_{n}$ for all $n\in\N$.
Normally,
$O(1)$ denotes any function $f\colon \N^+\to\R$ such that
there is $C\in\R$ with the property that
$\abs{f(n)}\le C$ for all $n\in\N^+$.
On the other hand,
$o(n)$ denotes any function $g\colon \N^+\to\R$ such
that $\lim_{n \to \infty}g(n)/n=0$.

$\X=
\left\{
  \lambda,0,1,00,01,10,11,000,\dotsc
\right\}$
is the set of finite binary strings
where $\lambda$ denotes the \textit{empty string},
and $\X$ is ordered as indicated.
We identify any string in $\X$ with a natural number in this order,
i.e.,
we consider $\varphi\colon \X\to\N$ such that $\varphi(s)=1s-1$
where the concatenation $1s$ of strings $1$ and $s$ is regarded
as a dyadic integer,
and then we identify $s$ with $\varphi(s)$.
For any $s \in \X$, $\abs{s}$ is the \textit{length} of $s$.
For any $n\in\N$, we denote by $\{0,1\}^n$
the set $\{\,s\mid s\in\X\;\&\;\abs{s}=n\}$.
A subset $S$ of $\X$ is called
\textit{prefix-free}
if no string in $S$ is a prefix of another string in $S$.
For any
function $f$,
the domain of definition of $f$ is denoted by $\Dom f$.
We write ``r.e.'' instead of ``recursively enumerable.''

Let $\alpha$ be an arbitrary real.
$\lfloor \alpha \rfloor$ is the greatest integer less than or equal to $\alpha$,
and $\lceil \alpha \rceil$ is the smallest integer greater than or equal to $\alpha$.
For any $n\in\N^+$,
we denote by $\rest{\alpha}{n}\in\X$
the first $n$ bits of the base-two expansion of
$\alpha - \lfloor \alpha \rfloor$ with infinitely many zeros.
For example,
in the case of $\alpha=5/8$,
$\rest{\alpha}{6}=101000$.
A real $\alpha$ is called \textit{right-computable} if
there exists a total recursive function $f\colon\N^+\to\Q$ such that
$\alpha\le f(n)$ for all $n\in\N^+$ and $\lim_{n\to\infty} f(n)=\alpha$.
On the other hand,
a real $\alpha$ is called \textit{left-computable} if
$-\alpha$ is right-computable.
A left-computable real is also called an \textit{r.e.}~real.
A real $\alpha$ is called \textit{computable} if
there exists a total recursive function $f\colon\N^+\to\Q$ such that
$\abs{\alpha-f(n)} < 1/n$ for all $n\in\N^+$.
It is then easy to show the following theorem.

\vspace*{-1mm}

\begin{theorem}\label{rfcomputable}
Let $\alpha\in\R$.
\vspace*{-2mm}
\begin{enumerate}
  \item $\alpha$ is computable if and only if
    $\alpha$ is both right-computable and left-computable.
  \item $\alpha$ is right-computable if and only if
    the set $\{\,r\in\Q\mid \alpha<r\,\}$ is r.e.\qed
\end{enumerate}
\end{theorem}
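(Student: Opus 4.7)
The plan is to prove each equivalence by explicit construction, relying only on the definitions of right-, left-, and computable reals together with standard dovetailing for recursive enumeration.

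For part (i), the forward direction is almost immediate: given a total recursive $f\colon\N^+\to\Q$ witnessing computability, so that $\abs{\alpha-f(n)}<1/n$, I would set $g(n)=f(n)+1/n$ and $h(n)=f(n)-1/n$; then $g$ witnesses right-computability directly, while $-h$ witnesses right-computability of $-\alpha$, so $h$ witnesses left-computability of $\alpha$. For the converse, let $F$ be a total recursive witness of right-computability and $G$ of left-computability, so that $G(n)\le\alpha\le F(n)$ for all $n$ and both sequences converge to $\alpha$. On input $n$ I would search for the least $m$ with $F(m)-G(m)<1/n$, which exists because both sequences tend to $\alpha$, and then output $G(m)$. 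Since $\abs{\alpha-G(m)}\le F(m)-G(m)<1/n$, this yields a total recursive sequence approximating $\alpha$ within $1/n$.

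For part (ii), if $\alpha$ is right-computable via $F$, the set $\{\,r\in\Q\mid\alpha<r\,\}$ is enumerated by dovetailing over pairs $(r,n)\in\Q\times\N^+$ and listing $r$ whenever $F(n)<r$: soundness follows from $\alpha\le F(n)<r$, while completeness uses the fact that any rational $r>\alpha$ satisfies $F(n)<r$ for all sufficiently large $n$. Conversely, given an r.e.\ enumeration $r_1,r_2,\dotsc$ of the set, I would define $F(n)=\min\{r_1,\dotsc,r_n\}$; each $r_i$ exceeds $\alpha$ so $F(n)\ge\alpha$, and the density of $\Q$ in $\R$ together with the fact that every rational strictly above $\alpha$ eventually appears in the enumeration ensures $F(n)\to\alpha$ from above.

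I do not anticipate significant obstacles, since this is a foundational statement whose proof amounts to standard manipulations. The only subtleties worth flagging are the termination of the search in the converse of (i), which depends on the simultaneous convergence of the upper and lower approximations to the common value $\alpha$, and the density argument in the converse of (ii), which uses that every rational strictly greater than $\alpha$ lies in the enumerated set.
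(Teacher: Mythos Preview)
Your proof is correct. Note, however, that the paper does not actually supply a proof of this theorem: it merely remarks ``It is then easy to show the following theorem'' and marks the statement with \qed, treating it as a standard fact. Your argument fills in exactly the routine details one would expect---sandwiching by $f(n)\pm 1/n$ for the forward direction of (i), searching for $F(m)-G(m)<1/n$ for the converse, and the dovetailing and running-minimum constructions for (ii)---so there is nothing to compare against beyond confirming that your write-up is sound, which it is. The two minor points you flag (termination of the search in (i) and the density argument in (ii)) are handled correctly; you might also remark in passing that the set $\{\,r\in\Q\mid \alpha<r\,\}$ is always infinite, so the enumeration $r_1,r_2,\dotsc$ in the converse of (ii) can indeed be taken as a total recursive sequence.
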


\vspace*{-5mm}

\subsection{Algorithmic Information Theory}
\label{ait}

In the following
we concisely review some definitions and results of
AIT
\cite{C75,C87b}.
A \textit{prefix-free machine} is a partial recursive function
$C\colon \X\to \X$
such that
$\Dom C$ is a prefix-free set.
For each prefix-free machine $C$ and each $s \in \X$,
$H_C(s)$ is defined
by
$H_C(s) =
\min
\left\{\,
  \abs{p}\,\big|\;p \in \X\>\&\>C(p)=s
\,\right\}$
(may be $\infty$).
A prefix-free machine $U$ is said to be \textit{optimal} if
for each prefix-free machine $C$ there exists $d\in\N$
with the following property;
if $p\in\Dom C$, then there is $q\in\Dom U$ for which
$U(q)=C(p)$ and $\abs{q}\le\abs{p}+d$.
It is easy to see that there exists an optimal prefix-free machine.
We choose a particular optimal prefix-free machine $U$
as the standard one for use,
and define $H(s)$ as $H_U(s)$,
which is referred to as
the \textit{program-size complexity} of $s$ or
the \textit{Kolmogorov complexity} of $s$.
It follows that
for every prefix-free machine $C$ there exists $d\in\N$ such that,
for every $s\in\X$,
\vspace*{-1mm}
\begin{equation}\label{minimal}
  H(s)\le H_C(s)+d.
\end{equation}
\vspace*{-4mm}\\
Based on this we can show that,
for every partial recursive function $\Psi\colon \X\to \X$,
there exists $d\in\N$ such that,
for every $s \in \Dom \Psi$,
\vspace*{-2mm}
\begin{equation}\label{Psi}
  H(\Psi(s))\le H(s)+d.
\end{equation}
\vspace*{-6mm}\\
Based on \eqref{minimal} we can also show that
there exists $d\in\N$ such that,
for every $s\neq\lambda$,
\vspace*{-3mm}
\begin{equation}\label{Hlabs}
  H(s)\le \abs{s}+2\log_2 \abs{s}+d.
\end{equation}
\vspace*{-4mm}

For any $s\in\X$,
we define $s^*$ as $\min\{\,p\in\X\mid U(p)=s\}$,
i.e., the first element in the ordered set $\X$
of all strings $p$ such that $U(p)=s$.
Then, $\abs{s^*}=H(s)$ for every $s\in\X$.


Chaitin~\cite{C75} introduced $\Omega$ number as follows.
For each optimal prefix-free machine $V$,
the halting probability $\Omega_V$ of $V$ is defined
by
\vspace*{-1mm}
\begin{equation*}
  \Omega_V=\sum_{p\in\Dom V}2^{-\abs{p}}.
\end{equation*}
\vspace*{-4mm}\\
For every optimal prefix-free machine $V$,
since $\Dom V$ is prefix-free,
$\Omega_V$ converges and $0<\Omega_V\le 1$.
%
%
For any $\alpha\in\R$,
we say that $\alpha$ is \textit{weakly Chaitin random}
if there exists $c\in\N$ such that
$n-c\le H(\rest{\alpha}{n})$ for all $n\in\N^+$
\cite{C75,C87b}.
%
%
Chaitin \cite{C75} showed that
$\Omega_V$ is weakly Chaitin random
for every optimal prefix-free machine $V$.
Therefore $0<\Omega_V<1$ for every optimal prefix-free machine $V$.
%
%


\vspace*{-2mm}

\subsection{Partial Randomness}
\label{partial}

\vspace*{-1mm}

In the work \cite{T02},
we generalized the notion of
the randomness of
a real
so that \textit{the degree of the randomness},
which is often referred to
as
\textit{the partial randomness} recently,
can be characterized by a real $T$
with $0<T\le 1$ as follows.

\begin{definition}[\boldmath weak Chaitin $T$-randomness]
  Let $T\in(0,1]$ and let $\alpha\in\R$.
  We say that $\alpha$ is \textit{weakly Chaitin $T$-random} if
  there exists $c\in\N$ such that, for all $n\in\N^+$,
  $Tn-c \le H(\rest{\alpha}{n})$.
  \qed
\end{definition}

In the case where $T=1$,
the weak Chaitin $T$-randomness results in the weak Chaitin randomness.

%

\begin{definition}[\boldmath $T$-compressibility and strict $T$-compressibility]
Let $T\in(0,1]$ and let $\alpha\in\R$.
We say that $\alpha$ is \textit{$T$-compressible} if
$H(\rest{\alpha}{n})\le Tn+o(n)$,
namely, if
$\limsup_{n \to \infty}H(\rest{\alpha}{n})/n\le T$.
We say that $\alpha$ is \textit{strictly $T$-compressible} if
there exists $d\in\N$ such that, for all $n\in\N^+$,
$H(\rest{\alpha}{n})\le Tn+d$.
\qed
\end{definition}

For every real $\alpha$,
if $\alpha$ is weakly Chaitin $T$-random and $T$-compressible,
then $\lim_{n\to \infty} H(\rest{\alpha}{n})/n = T$,
i.e., the \textit{compression rate} of $\alpha$ equals to $T$.

In the work \cite{T02},
we generalized
Chaitin $\Omega$ number
to $Z(T)$
as follows.
%
For each optimal prefix-free machine $V$ and each real $T>0$,
the \textit{generalized halting probability} $Z_V(T)$ of $V$ is
defined
by
\vspace*{-2mm}
\begin{equation*}
  Z_V(T) = \sum_{p\in\Dom V}2^{-\frac{\abs{p}}{T}}.
\end{equation*}
\vspace*{-4mm}\\
Thus,
$Z_V(1)=\Omega_V$.
If $0<T\le 1$, then $Z_V(T)$ converges and $0<Z_V(T)<1$,
since $Z_V(T)\le \Omega_V<1$.
The following theorem holds for $Z_V(T)$.

\begin{theorem}[Tadaki \cite{T02}]\label{pomgd}
Let $V$ be an optimal prefix-free machine.
\vspace*{-2mm}
\begin{enumerate}
  \item If $0<T\le 1$ and $T$ is computable,
    then $Z_V(T)$ is a left-computable real which
    is weakly Chaitin $T$-random and $T$-compressible.
  \item If $1<T$, then $Z_V(T)$ diverges to $\infty$.\qed
\end{enumerate}
\end{theorem}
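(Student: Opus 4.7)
The plan is to handle the two ranges of $T$ separately. For part (ii), with $T>1$, I would exploit the fact that $V$ is optimal, so it satisfies the analogue of \eqref{Hlabs}: every string $s\in\{0,1\}^n$ is $V$-produced by some program of length at most $n+2\log_2 n+O(1)$. Summing these contributions yields
\[
Z_V(T)\ge 2^n\cdot 2^{-(n+2\log_2 n+O(1))/T}=2^{(1-1/T)n-O(\log n)},
\]
which diverges as $n\to\infty$ because $1-1/T>0$.

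For part (i), left-computability is immediate: enumerate $\Dom V$ stage by stage and use the computability of $T$ to form an increasing computable sequence of rational lower approximations to each summand $2^{-|p|/T}$, hence to $Z_V(T)$. For the weak Chaitin $T$-randomness, I plan a reduction. From $\rest{Z_V(T)}{n}$ and computable $T$ one knows $Z_V(T)$ to within $O(2^{-n})$. Enumerate $\Dom V$ and search for the least stage $N^*$ at which the partial sum $S_t=\sum_{i\le t}2^{-|p_i|/T}$ satisfies $Z_V(T)-S_t<2^{-n}$; any $V$-program appearing later must contribute $<2^{-n}$, and so has length $>Tn$. Letting $s_n$ be the first binary string not produced by any $V$-program enumerated by stage $N^*$, one obtains $H_V(s_n)>Tn$, and the optimality of $V$ gives $H(s_n)\ge Tn-O(1)$. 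Since $s_n$ is partial-recursively computed from $\rest{Z_V(T)}{n}$, \eqref{Psi} yields $H(\rest{Z_V(T)}{n})\ge Tn-O(1)$.

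The main obstacle is the $T$-compressibility, $H(\rest{Z_V(T)}{n})\le Tn+o(n)$. My plan is to construct a prefix-free machine $D$ together with, for each $n$, an input $q_n$ of length at most $Tn+o(n)$ satisfying $D(q_n)=\rest{Z_V(T)}{n}$, and then to invoke \eqref{minimal}. A natural candidate takes $q_n$ to be a short self-delimited tag for $n$ (length $O(\log n)$) concatenated with a $V$-halting program $p^*_n$ whose enumeration stage is the earliest one at which the running partial sum $S_t$ matches $\rest{Z_V(T)}{n}$ in its first $n$ bits; given $(n,p^*_n)$, the machine $D$ re-enumerates $\Dom V$ until $p^*_n$ appears, re-forms $S_t$ via the computable $T$, and outputs its first $n$ bits. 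The delicate step is proving $|p^*_n|\le Tn+o(n)$: one must show that once a sufficiently short halting program has been absorbed into $S_t$, the remaining tail $\sum_{p\text{ later}}2^{-|p|/T}$ is too small to propagate any further carry across bit $n$, and quantify the length of the last carry-inducing program in terms of $Tn$. The computability of $T$ is essential both for the internal simulation performed by $D$ and for this length bound.
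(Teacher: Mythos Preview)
The paper does not prove Theorem~\ref{pomgd} in the text (it is quoted from \cite{T02}), but it proves the exact analogue, Theorem~\ref{pCSVT}, for $\CS_V(T)$, and those arguments transfer verbatim to $Z_V(T)$. Your treatment of left-computability and of weak Chaitin $T$-randomness matches the paper's proof of Theorem~\ref{pCSVT}~(i) essentially line for line. Your divergence argument for $T>1$ is correct and in fact more direct than the paper's: in Theorem~\ref{pCSVT}~(iii) the paper argues by contradiction (assume convergence, deduce weak Chaitin $t$-randomness for a computable $t>1$, contradict \eqref{Hlabs}), whereas your counting bound gives divergence in one line.

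The $T$-compressibility part, however, has a real gap. Your plan encodes $\rest{Z_V(T)}{n}$ by the single program $p^*_n=p_{t^*}$, where $t^*$ is the first stage at which the partial sum $S_t$ has the correct first $n$ bits, and then asserts $|p^*_n|\le Tn+o(n)$. But notice that since $S_t$ is increasing and bounded above by $Z_V(T)$, the first $n$ bits of $S_t$ agree with $\rest{Z_V(T)}{n}$ \emph{exactly} when $S_t\ge 0.(\rest{Z_V(T)}{n})$; once this threshold is crossed it is never uncrossed, so there are no later carries across bit $n$ to worry about. The actual obstruction is different: the program $p_{t^*}$ that first pushes $S_t$ over this threshold can be arbitrarily long. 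If $S_{t^*-1}=0.(\rest{Z_V(T)}{n})-\varepsilon$ with $\varepsilon$ tiny, then $2^{-|p_{t^*}|/T}$ need only exceed $\varepsilon$, which places no useful upper bound on $|p_{t^*}|$ in terms of $n$. Nothing about a fixed recursive enumeration of $\Dom V$ prevents this, so the inequality $|p^*_n|\le Tn+o(n)$ is not established and is in general false.

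The paper's route avoids this entirely. For Theorem~\ref{pCSVT}~(ii) it observes that $\sum_i(2^{-|p_i|/T})^T=\sum_i 2^{-|p_i|}=\Omega_V<\infty$, so $Z_V(T)$ is a $T$-convergent left-computable real, and then invokes Theorem~\ref{mfcs10}. If you want a self-contained fix closer to your encoding idea, replace the single marker $p^*_n$ by the first $\lceil Tn\rceil$ bits of $\Omega_V$: from $\rest{\Omega_V}{\lceil Tn\rceil}$ one finds a stage $t$ with $\sum_{i>t}2^{-|p_i|}<2^{-Tn}$, and then the subadditivity inequality $\sum a_i^{1/T}\le(\sum a_i)^{1/T}$ (used in the proof of Theorem~\ref{fpTc2}) gives $\sum_{i>t}2^{-|p_i|/T}<2^{-n}$, pinning down $\rest{Z_V(T)}{n}$ up to $O(1)$ possibilities. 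This yields $H(\rest{Z_V(T)}{n})\le Tn+O(\log n)$ via \eqref{Hlabs} and \eqref{minimal}.
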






The computability of the value $Z_V(T)$ has a special implication on $T$
as follows.

\begin{theorem}[fixed point theorem on partial randomness, Tadaki~\cite{T08CiE}]
\label{fpt}
Let $V$ be an optimal prefix-free machine.
For every $T\in(0,1)$,
if $Z_V(T)$ is computable,
then $T$ is weakly Chaitin $T$-random and $T$-compressible,
and therefore
\vspace*{-2mm}
\begin{equation}\label{fppr}
  \lim_{n\to\infty}\frac{H(\rest{T}{n})}{n}=T.
\end{equation}
\qed
\end{theorem}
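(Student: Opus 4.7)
The plan is to mimic Chaitin's proof that $\Omega$ is weakly Chaitin random, but with the roles of ``target'' and ``source of information'' interchanged. Since $Z_V(T)$ is computable its base-two expansion carries only $O(\log n)$ bits of information, so any randomness extractable from the pair $(T, Z_V(T))$ must reside in $T$. From $\rest{T}{n}$ I recover the dyadic rational $T_n \in \Q$ whose binary expansion is $0.\rest{T}{n}$, so that $T_n \le T < T_n + 2^{-n}$; from $n$ alone (using the computability of $Z_V(T)$) I compute a rational $q_n$ with $\abs{Z_V(T) - q_n} < 2^{-n}$.

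For weak Chaitin $T$-randomness, I would first establish a Lipschitz estimate: since $T < 1$, the formal derivative $Z'_V(t) = \sum_{p \in \Dom V} \abs{p}\, 2^{-\abs{p}/t}\,(\ln 2)/t^2$ converges and is bounded above by some constant $K$ on any closed subinterval of $(0,1)$ containing $T$, hence $0 \le Z_V(T) - Z_V(T_n) \le K \cdot 2^{-n}$ for all large $n$. Next I would enumerate $\Dom V$ in stages, compute the partial sums $S_k = \sum_{p \in F_k} 2^{-\abs{p}/T_n}$, and halt at the first stage $k$ with $S_k \ge q_n - (K+1) \cdot 2^{-n}$; this moment is effective from $\rest{T}{n}$. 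At that stage the set $F_k$ must contain every $p \in \Dom V$ with $\abs{p} \le T n - c$, where $c$ depends only on $T$ and $K$, because omitting even one such program would contribute at least $2^{-(Tn-c)/T_n} = \Omega(2^{-n})$ to the tail, exceeding the $(K+2) \cdot 2^{-n}$ slack. From this complete list of short programs of $V$, I would effectively select a string $s \in \X$ of length $\lceil T n \rceil$ that is not the output $V(p)$ for any $p \in F_k$ with $\abs{p} \le T n - c$; by comparing $U$ and $V$ via \eqref{minimal}, $H(s) \ge T n - O(1)$, and since $s$ is computable from $\rest{T}{n}$, the implication \eqref{Psi} gives $H(\rest{T}{n}) \ge T n - O(1)$.

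For $T$-compressibility I would invert the construction: design a prefix-free machine which, on a self-delimited code for $n$ followed by an auxiliary string of length roughly $T n + O(\log n)$ specifying which of the at most $2^{Tn}$ programs of $V$ of length $\le T n$ is the last one enumerated before the partial sum crosses the computable threshold $q_n$, outputs $\rest{T}{n}$. Since $Z_V(T)$ is computable it requires no separate description, and the two-sided Lipschitz behavior of $Z_V$ in a neighborhood of $T$ ensures that the identity of this critical stage, together with $T_n$'s position in the interval, pins $\rest{T}{n}$ down uniquely. Combining both halves yields $\lim_{n\to\infty} H(\rest{T}{n})/n = T$ as in \eqref{fppr}.

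The main obstacle is the ``one missing program exceeds the slack'' estimate in step three of the randomness direction: confirming that $2^{-(T n - c)/T_n} \ge (K + 2) \cdot 2^{-n}$ holds with a single constant $c$ for all large $n$. Rearranging yields the requirement $c > (T - T_n)\, n + T_n \log_2(K+2)$; this succeeds because $(T - T_n)\, n < n \cdot 2^{-n} \to 0$ and $T_n \le 1$, so any $c > \log_2(K+2) + 1$ eventually works. One must take care, however, that the neighborhood chosen for the Lipschitz bound is compatible with $T_n$ staying bounded away from $0$ and $1$ for $n$ large. The compressibility direction presents a parallel subtlety in verifying that the $\approx Tn$-bit auxiliary string uniquely localizes $T$ within the $2^{-n}$ window, which again rests on the two-sided Lipschitz behavior of $Z_V$ at $T$.
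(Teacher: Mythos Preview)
The paper does not prove Theorem~\ref{fpt} directly; it cites \cite{T09FICS} and instead proves the parallel Theorem~\ref{main} for $\CS_V(T)$ via Theorems~\ref{fpwcTr}--\ref{fpTc2}. Comparing your proposal to that proof: your randomness direction is essentially the paper's Theorem~\ref{fpwcTr} transported to $Z_V$. Both arguments use the mean value theorem to get an upper Lipschitz bound on $t\mapsto Z_V(t)$, enumerate $\Dom V$ until a partial sum overtakes a computable approximation to $Z_V(T)$, conclude that every short program has appeared, and then output a string with no short $V$-description. Your analysis of the ``one missing program exceeds the slack'' constant is correct and matches the paper's inequality~\eqref{self-contained}. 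So that half is fine.

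The compressibility direction, however, has a genuine gap. Your auxiliary datum is ``the last program of length $\le Tn$ enumerated before the partial sum at parameter $T$ crosses $q_n$.'' The decoder, who does not know $T$, can recover only the stage at which that particular short program appeared---which may be far earlier than the crossing stage, since arbitrarily many long programs can be enumerated in between. Consequently the decoder cannot reconstruct the crossing stage, cannot evaluate any partial sum at $T$, and your appeal to ``two-sided Lipschitz behavior'' has nothing to act on. Encoding a single short program simply does not pin down $T$ to within $2^{-n}$.

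The paper's route (Theorems~\ref{fpTc1} and~\ref{fpTc2}) is quite different and worth contrasting. First, from right-computability of $\CS_V(T)$ alone it deduces that $T$ is right-computable (Theorem~\ref{fpTc1}); your sketch omits this intermediate step. Then the auxiliary $\lceil Tn\rceil$-bit string is taken to be $\rest{\CS_V}{\lceil Tn\rceil}$ (for $Z_V$ one would use $\rest{\Omega_V}{\lceil Tn\rceil}$), i.e.\ bits of a \emph{fixed} real, not a program chosen relative to $T$. From this the decoder finds a stage $k_0$ with tail $\sum_{i>k_0}2^{-\abs{s_i}}<2^{-Tn}$, and the superadditivity inequality $a^{1/T}+b^{1/T}\le(a+b)^{1/T}$ converts that into a tail bound $<2^{-n}$ at parameter $T$. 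Finally the \emph{left}-computability of $\CS_V(T)$ combined with the already-established \emph{right}-computability of $T$ and a lower Lipschitz bound (inequality~\eqref{self-contained2}) lets the decoder search for a rational just above $T$ and hence determine $\rest{T}{n}$ up to $O(1)$ candidates. If you want to repair your compressibility argument, the cleanest fix is to follow this template: use $\rest{\Omega_V}{\lceil Tn\rceil}$ as the auxiliary string rather than a distinguished program.
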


The equality \eqref{fppr}
means that the compression rate of $T$ equals to $T$ itself.
Intuitively,
we might interpret the meaning of \eqref{fppr} as follows:
Consider imaginarily a file of infinite size whose content is
\begin{center}
``The compression rate of this file is $0.100111001\dotsc\dotsc$''
\end{center}
When this file is compressed,
the compression rate of this file actually
equals to $0.100111001\dotsc\dotsc$,
as the content of this file says.
This situation is self-referential and forms a fixed point.
For a simple and self-contained proof of Theorem~\ref{fpt},
see Section~5 of Tadaki~\cite{T09FICS}.

A left-computable real has a special property on partial randomness,
as shown in Theorem~\ref{mfcs10} below.

\begin{definition}[\boldmath
$T$-convergence, Tadaki \cite{T09MFCS}]
Let $T\in(0,1]$.
An increasing sequence $\{a_n\}$ of reals is called
\textit{$T$-convergent} if
$\sum_{n=0}^{\infty} (a_{n+1}-a_{n})^T\!<\infty$.
A left-computable real $\alpha$ is called \textit{$T$-convergent} if
there exists a $T$-convergent computable,
increasing sequence of rationals which
converges to $\alpha$.
\qed
\end{definition}

\begin{theorem}[Tadaki \cite{T10MFCS}]\label{mfcs10}
Let $T$ be a computable real with $0<T<1$.
For every left-computable real $\alpha$,
if $\alpha$ is $T$-convergent then $\alpha$ is strictly $T$-compressible.
\qed 
\end{theorem}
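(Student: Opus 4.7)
The plan is to apply the Kraft--Chaitin theorem to construct a prefix-free machine $C$ satisfying $H_C(\rest{\alpha}{m}) \le Tm + O(1)$ for every $m\ge 1$, and then to conclude via \eqref{minimal} that $H(\rest{\alpha}{m}) \le Tm + d$ for some $d\in\N$.

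By hypothesis, fix a computable, increasing sequence $\{a_n\}_{n\ge 0}$ of rationals with $a_n\to\alpha$ and $S:=\sum_{n\ge 0} b_n^T<\infty$, where $b_n:=a_{n+1}-a_n$. After refining the sequence by replacing each $b_n$ with its finite binary expansion and inserting the corresponding dyadic intermediate points, I may further assume $b_n=2^{-k_n}$ for each $n$; this preserves the convergence of $\sum_n b_n^T$ up to the multiplicative factor $1/(1-2^{-T})$, using the subadditivity of $x\mapsto x^T$ on $T\in(0,1)$. For each $n\ge 0$ and each $m\ge k_n$ I would then submit the Kraft--Chaitin request $(\lceil Tm\rceil + c,\,\rest{a_{n+1}}{m})$, where the constant $c\in\N^+$ is to be fixed. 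The set of requests is r.e., and the total Kraft weight is bounded by
\[
\sum_n\sum_{m\ge k_n}2^{-\lceil Tm\rceil-c}
\;\le\;\frac{2^{-c}}{1-2^{-T}}\sum_n 2^{-Tk_n}
\;=\;\frac{2^{-c}\,S}{1-2^{-T}},
\]
which is $\le 1$ once $c$ is chosen large enough, depending only on the computable $T$ and on $S$.

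The main obstacle is verifying \emph{coverage}: for every $m\ge 1$, some submitted request must actually have string component equal to $\rest{\alpha}{m}$. Since $a_n\nearrow\alpha$, the equality $\rest{a_{n+1}}{m}=\rest{\alpha}{m}$ holds for every $n$ past the threshold $N(m)-1$, where $N(m):=\min\{n:a_n\ge\rest{\alpha}{m}/2^m\}$; the nontrivial task is to find such an $n$ that also satisfies $k_n\le m$, i.e.\ $b_n\ge 2^{-m}$. The hard part will be the case in which every $n\ge N(m)-1$ has $b_n<2^{-m}$. In that situation I would invoke the subadditivity estimate $(\alpha-a_{N(m)-1})^T\le \sum_{k\ge N(m)-1}b_k^T$ together with the left-computability of $\alpha$ to produce an alternative effective description of $\rest{\alpha}{m}$ of length $Tm+O(1)$---for instance by combining requests from adjacent levels $m'\in[m,m+O(1)]$, or by directly encoding a computable upper bound on $N(m)$ extracted from the partial sums of $\sum b_n^T$. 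Finitely many exceptional $m$ can be absorbed by hardcoding explicit requests, which affects only the constant $c$.

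Applying the Kraft--Chaitin theorem then yields a prefix-free machine $C$ with $H_C(\rest{\alpha}{m})\le \lceil Tm\rceil + c$ for every $m\ge 1$, and \eqref{minimal} gives $H(\rest{\alpha}{m})\le Tm + d$, establishing the strict $T$-compressibility of $\alpha$.
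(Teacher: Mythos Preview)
The paper does not actually prove this theorem; it is quoted from \cite{T10MFCS} and marked with an immediate \qed. So there is no proof in the paper to compare against, and your proposal has to stand on its own.

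Your Kraft--Chaitin framework and the weight estimate are correct, but the coverage step---which you rightly flag as ``the hard part''---is a genuine gap, and your suggested patches are too vague to close it. In your scheme, the level-$m$ request attached to index $n$ has string component $\rest{a_{n+1}}{m}$, and this equals $\rest{\alpha}{m}$ only when $m$ does not exceed the common-prefix length $p_{n+1}$ of $a_{n+1}$ and $\alpha$; on the other hand the request is issued only when $m\ge k_n$. Thus level $m$ is covered iff some $n$ satisfies $k_n\le m\le p_{n+1}$. But $T$-convergent computable sequences with $k_n>p_{n+1}$ for \emph{every} $n$ are easy to produce: take $a_n=\alpha-3\cdot 2^{-n-c_0}$ for a suitable shift $c_0$, refine each gap $b_n=3\cdot 2^{-n-c_0-1}=2^{-n-c_0}+2^{-n-c_0-1}$ into its two dyadic substeps, and check that for every refined step the jump size $2^{-k}$ is strictly smaller than the remaining distance $\alpha-a'$ to the limit, forcing $k>p$. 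For such a sequence your request set never contains $\rest{\alpha}{m}$ for any $m$ beyond an initial segment, so Kraft--Chaitin produces a machine that says nothing about $\alpha$.

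Neither of your proposed repairs works as stated. The subadditivity bound $(\alpha-a_{N(m)-1})^T\le\sum_{k\ge N(m)-1}b_k^T$ goes the wrong way to bound the tail $\sum b_k^T$ from above, and ``encoding a computable upper bound on $N(m)$'' presupposes effective access to $\rest{\alpha}{m}$, which is exactly what you are trying to obtain. What is actually needed is a sharper refinement: decompose each gap $[a_n,a_{n+1}]$ into its canonical cover by \emph{maximal aligned dyadic intervals} (not merely into substeps whose lengths are powers of two). This costs only a constant factor $2/(1-2^{-T})$ in the $T$-weight, and---because the refined endpoints now sit on the dyadic grid at the correct scales---one can arrange that the right endpoint of some substep of level $\le m$ always lands in the dyadic $m$-interval containing $\alpha$. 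Without a refinement of this kind (or an equivalent device guaranteeing coverage), the argument does not go through.
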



\vspace*{-6mm}

\section{New Representation of Chaitin $\Omega$ Number}
\label{new}

In this section,
we introduce a new representation $\CS$ of Chaitin $\Omega$ number
based on the set of all compressible strings,
and investigate its property.

\begin{definition}
For any optimal prefix-free machine $V$,
$\CS_V$ is defined
by
\begin{equation*}
  \CS_V=\sum_{H_V(s)<\abs{s}}2^{-\abs{s}},
\end{equation*}
where the sum is over all $s\in\X$ such that $H_V(s)<\abs{s}$.
\qed
\end{definition}

For each optimal prefix-free machine $V$,
we see that
\begin{equation*}
  \CS_V
  <\sum_{H_V(s)<\abs{s}}2^{-H_V(s)}
  \le\sum_{s\in\X}2^{-H_V(s)}
  \le\sum_{p\in\Dom V}2^{-\abs{p}}
  =\Omega_V.
\end{equation*}
Thus, $\CS_V$ converges and $0<\CS_V<\Omega_V$
for every optimal prefix-free machine $V$.
It is important to evaluate
how many strings $s$ satisfy the condition $H_V(s)<\abs{s}$.
For that purpose,
we define
$S_V(n)=\{\,s\in\X\mid\abs{s}=n\;\&\;H_V(s)<n\,\}$
for each optimal prefix-free machine $V$ and each $n\in\N$.
We can then show the following theorem.

\begin{theorem}\label{evalsHsln}
Let $V$ be an optimal prefix-free machine.
Then $S_V(n)\subsetneqq \{0,1\}^n$ for every $n\in\N$.
Moreover
$\#S_V(n)=2^{n-H(n)+O(1)}$
for all $n\in\N^+$,
i.e.,
there exists $d\in\N$ such that
(i) $\#S_V(n)\le 2^{n-H(n)+d}$ for all $n\in\N$, and
(ii) $2^{n-H(n)-d}\le\#S_V(n)$ for all sufficiently large $n\in\N$.
\qed
\end{theorem}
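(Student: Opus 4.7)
The plan is to prove the three assertions in turn: the strict inclusion, the lower bound (ii), and the upper bound (i); the bulk of the work is in the upper bound, which I would derive via a Kraft--Chaitin argument applied to an appropriate semi-measure.

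For the strict inclusion $S_V(n) \subsetneq \{0,1\}^n$, I would observe that for $n \ge 1$ every element of $S_V(n)$ is the image under $V$ of some program of length at most $n-1$, and the total number of binary strings of length at most $n-1$ is $2^n - 1$; hence $\#S_V(n) \le 2^n - 1 < 2^n$. The case $n = 0$ is immediate from $H_V(\lambda) \ge 0$.

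For the lower bound (ii), I would use a padding construction. Fix a constant $c$ to be determined. For $n$ sufficiently large that $n - H(n) - c \ge 0$, and for each $w \in \{0,1\}^{n - H(n) - c}$, set $t_w := w\,0^{H(n) + c}$, a string of length $n$. Design a prefix-free machine $C$ that, on input $n^* w$, reads the self-delimiting $n^*$ (thereby recovering both $n$ and $H(n) = \abs{n^*}$), reads exactly $n - H(n) - c$ further bits to obtain $w$, and outputs $w\,0^{H(n)+c}$. Since $\abs{n^* w} = n - c$, optimality of $V$ gives $H_V(t_w) \le n - c + O(1)$; picking $c$ larger than the implicit additive constant forces $H_V(t_w) < n$, so $t_w \in S_V(n)$. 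Since $w \mapsto t_w$ is injective, $\#S_V(n) \ge 2^{n - H(n) - c}$ for all sufficiently large $n$.

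For the upper bound (i), I would view $r(n) := \#S_V(n) \cdot 2^{-n}$ as a lower semi-computable semi-measure on $\N$. Lower semi-computability is clear because $S_V(n)$ is r.e.\ uniformly in $n$ (enumerate pairs $(p,s)$ with $V(p) = s$ and $\abs{p} < \abs{s} = n$), and sub-probability holds because
\[
 \sum_{n \in \N} r(n) \;=\; \sum_{n \in \N} \sum_{s \in S_V(n)} 2^{-\abs{s}} \;=\; \CS_V \;<\; 1.
\]
I would then invoke Kraft--Chaitin on an effective enumeration of requirements of the form $(n, \lceil -\log_2 r^{(t)}(n)\rceil)$, issued each time the running lower approximation $r^{(t)}(n) \nearrow r(n)$ roughly doubles; the total requirement weight is bounded by a constant multiple of $\sum_n r(n) < 1$, so one obtains a prefix-free machine $C$ producing $n$ from some input of length $\le -\log_2 r(n) + O(1)$. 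Optimality of $U$ then yields $H(n) \le n - \log_2 \#S_V(n) + O(1)$, i.e.\ $\#S_V(n) \le 2^{n - H(n) + O(1)}$. The main obstacle is precisely this bookkeeping of Kraft--Chaitin requirements so that their total weight remains bounded by a constant, given that $r$ is only lower semi-computable rather than computable and requirement lengths shrink over stages.
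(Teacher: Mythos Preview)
Your proposal is correct. Note, however, that the paper does not actually supply a proof of this theorem: after the statement it only remarks that the strict inclusion ``is easily shown by counting the number of binary strings of length less than $n$,'' and then cites Solovay's classical result $\#\{\,s\mid H_V(s)<n\,\}=2^{n-H(n)+O(1)}$ as the source of the second half. So your write-up is filling in a proof the paper deliberately omits.

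Your counting argument for the strict inclusion is exactly the one the paper hints at. Your arguments for the asymptotic bounds are the standard ones used to prove Solovay's estimate and its variants: padding via the prefix-free machine $q\,w\mapsto w\,0^{\abs{q}+c}$ (with $q\in\Dom U$) for the lower bound, and the Levin coding theorem applied to the lower semi-computable sub-probability $r(n)=\#S_V(n)\cdot 2^{-n}$ for the upper bound. The identity $\sum_n r(n)=\CS_V<1$ you use is precisely the one the paper establishes just before the theorem, so the pieces fit together cleanly. The ``bookkeeping'' you flag in the Kraft--Chaitin step is just the standard proof of the coding theorem (issue a new request each time the approximation to $r(n)$ crosses a power of $2$; total weight is bounded by $2\sum_n r(n)$), so there is no real obstacle there.
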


\vspace*{-1mm}


The first half of Theorem~\ref{evalsHsln} is easily shown
by counting the number of binary strings of length less than $n$.
Solovay~\cite{Sol75} showed that
$\#\{\,s\in\X\mid H_V(s)<n\,\}=2^{n-H(n)+O(1)}$
for every optimal prefix-free machine $V$.
The last half of Theorem~\ref{evalsHsln} slightly improves this result.

\vspace*{-1mm}

\begin{theorem}\label{CSVwCr}
For every optimal prefix-free machine $V$,
$\CS_V$ is a left-computable real which is weakly Chaitin random.
\qed
\end{theorem}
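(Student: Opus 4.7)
The plan is to establish the two assertions separately, following the general scheme of Chaitin's original proof for $\Omega$ but adapted to the set of compressible strings.

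For left-computability, the crucial observation is that the set $\{s\in\X : H_V(s)<\abs{s}\}$ is r.e.: by dovetailing the computations of $V$ on every input $p\in\X$, we effectively enumerate all pairs $(p,V(p))$ with $V(p)$ defined, and we recognise a string $s$ as compressible the moment some $p$ with $V(p)=s$ and $\abs{p}<\abs{s}$ is discovered. Taking partial sums of $2^{-\abs{s}}$ along this enumeration yields a computable, nondecreasing sequence of rationals converging to $\CS_V$, which is exactly what left-computability requires.

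For weak Chaitin randomness, I would construct a partial recursive function $\Psi\colon\X\to\X$ with the property that, for every $n$, the string $\Psi(\rest{\CS_V}{n})$ belongs to $\{0,1\}^n\setminus S_V(n)$. Given an input $w$ of length $n$, $\Psi$ sets $q$ equal to the rational number whose binary expansion is $0.w$ (padded with zeros), runs the enumeration of compressible strings above while maintaining the running partial sum $T$, and stops as soon as $T>q$. Because $q\le\CS_V<q+2^{-n}$, the residual mass $\CS_V-T$ is then strictly less than $2^{-n}$, so no compressible string of length at most $n$ can remain unenumerated: any such missing string would alone contribute at least $2^{-n}$ to the residual mass. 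Hence the enumeration captures all of $S_V(n)$, and by the first half of Theorem~\ref{evalsHsln} the set $\{0,1\}^n\setminus S_V(n)$ is nonempty; $\Psi(w)$ outputs its lexicographically first element. For the target string $s=\Psi(\rest{\CS_V}{n})$ we then have $H_V(s)\ge n$, which by optimality of $U$ implies $H(s)\ge n-O(1)$. Applying inequality \eqref{Psi} to $\Psi$ gives $H(s)\le H(\rest{\CS_V}{n})+O(1)$, and combining the two bounds produces a single constant $c\in\N$ with $n-c\le H(\rest{\CS_V}{n})$ for all $n\in\N^+$, which is exactly weak Chaitin randomness.

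The main subtlety I anticipate is the dyadic edge case: if $\CS_V$ happens to be a dyadic rational then $q=\CS_V$ holds for all sufficiently large $n$, the partial sum never strictly exceeds $q$, and $\Psi$ fails to halt on those inputs, breaking the incompressibility argument. This possibility has to be ruled out or circumvented—for instance, by directly showing that $\CS_V$ is not dyadic (using that its left-computable approximation receives contributions from infinitely many compressible strings, since $\#S_V(n)\to\infty$ by the second half of Theorem~\ref{evalsHsln}), or by using a variant of $\Psi$ that reads a few extra bits of $\rest{\CS_V}{n+k}$ and waits for a slightly stricter threshold so the halting condition is forced. This edge case is where the technical care must be concentrated; everything else reduces to the standard bookkeeping above.
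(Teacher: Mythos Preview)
Your argument is essentially the paper's own argument specialised to $T=1$ (the paper actually proves the more general Theorem~\ref{pCSVT}\,(i) and obtains Theorem~\ref{CSVwCr} as the case $T=1$): the r.e.\ enumeration of $\{s:H_V(s)<|s|\}$, the partial-sum search, the extraction of an incompressible string of length $n$ via Theorem~\ref{evalsHsln}, and the use of \eqref{Psi} are all the same.

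The one genuine difference is exactly at the point you flagged, and your first proposed repair does not work. The fact that infinitely many strings contribute to $\CS_V$ does \emph{not} by itself exclude $\CS_V$ from being a dyadic rational: an infinite sum $\sum_i 2^{-|s_i|}$ can perfectly well be dyadic (e.g.\ one string of each positive length gives $\sum_{n\ge 1}2^{-n}=1$). So ``$\#S_V(n)\to\infty$'' is not the right lever here. Your second suggestion (read a few extra bits) can be made to work but is unnecessary. The paper's device is cleaner: instead of $\rest{\CS_V}{n}$, first work with $\alpha_n$, the first $n$ bits of the base-two expansion of $\CS_V$ having infinitely many \emph{ones}. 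Then $0.\alpha_n<\CS_V$ holds strictly for every $n$ with no assumption on $\CS_V$, so the search always halts and your computation yields $n\le H(\alpha_n)+O(1)$. This inequality already forces $\CS_V$ to be non-computable (if it were, $\alpha_n$ would be computable from $n$ and $H(\alpha_n)=O(\log n)$), hence non-dyadic, and \emph{a posteriori} $\alpha_n=\rest{\CS_V}{n}$ for all $n$. So the edge case is eliminated by a bootstrap rather than by a separate argument.
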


\vspace*{-1mm}

Theorem~\ref{CSVwCr} results from
each of Theorem~\ref{pCSVT} (i) and Theorem~\ref{pCSbVT} (i) below
by setting $T=1$.
Thus,
we here omit the proof of Theorem~\ref{CSVwCr}.

The works of Calude, et al.~\cite{CHKW01} and Ku\v{c}era and Slaman \cite{KS01}
showed that,
for every $\alpha\in(0,1)$,
$\alpha$ is left-computable and weakly Chaitin random if and only if
there exists an optimal prefix-free machine $V$ such that $\alpha=\Omega_V$.
Thus, it follows from Theorem~\ref{CSVwCr} that,
for every optimal prefix-free machine $V$,
there exists an optimal prefix-free machine $W$ such that $\CS_V=\Omega_W$.
However,
it is open whether the following holds or not:
For every optimal prefix-free machine $W$,
there exists an optimal prefix-free machine $V$ such that $\Omega_W=\CS_V$.

In the subsequent two sections,
we generalize $\CS_V$ to two directions $\CS_V(T)$ and $\CSb_V(T)$ with a real $T>0$.
We see that the reals $\CS_V(T)$ and $\CSb_V(T)$ both have
the same randomness properties as $Z_V(T)$
(i.e., the properties shown in Theorem~\ref{pomgd} for $Z_V(T)$).

\vspace*{-3mm}

\section{Generalization of $\CS$ to $\CS(T)$}
\label{gen1}

\vspace*{-2mm}

\begin{definition}
For any optimal prefix-free machine $V$ and any real $T>0$,
$\CS_V(T)$ is defined
by
\begin{equation*}
  \CS_V(T)=\sum_{H_V(s)<\abs{s}}2^{-\frac{\abs{s}}{T}}.
\end{equation*}
\vspace*{-12mm}\\
\qed
\end{definition}
\vspace*{4mm}

Thus,
$\CS_V(1)=\CS_V$.
If $0<T\le 1$, then $\CS_V(T)$ converges and $0<\CS_V(T)<1$,
since $\CS_V(T)\le \CS_V<1$.
The following theorem holds for $\CS_V(T)$.

\begin{theorem}\label{pCSVT}
Let $V$ be an optimal prefix-free machine, and let $T>0$.
\vspace*{-2mm}
\begin{enumerate}
  \item If $T$ is computable and $0<T\le 1$,
    then $\CS_V(T)$ is a left-computable real
    which is weakly Chaitin $T$-random.
  \item If $T$ is computable and $0<T<1$,
    then $\CS_V(T)$ is strictly $T$-compressible.
  \item If $1<T$, then $\CS_V(T)$ diverges to $\infty$.
\end{enumerate}
\end{theorem}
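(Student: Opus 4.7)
For part (iii), I would lower-bound the sum by grouping over lengths. Every $s\in S_V(n)$ contributes exactly $2^{-n/T}$, so using the second half of Theorem~\ref{evalsHsln},
\[
  \CS_V(T) \;\ge\; \sum_{n\ge N_0}\#S_V(n)\cdot 2^{-n/T} \;\ge\; \sum_{n\ge N_0} 2^{n(1-1/T)-H(n)-d}
\]
for some $N_0,d\in\N$. When $T>1$, one has $1-1/T>0$, while \eqref{Hlabs} gives $H(n)=O(\log n)$; hence the exponent tends to $+\infty$ and the series diverges.

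For part (i), the left-computability is essentially bookkeeping: the set $E=\{s:H_V(s)<\abs{s}\}$ is r.e., and since $T$ is computable, rational lower approximations to $2^{-\abs{s}/T}$ can be computed uniformly in $s$, giving a computable, increasing rational sequence tending to $\CS_V(T)$ from below. For weak Chaitin $T$-randomness, the plan is to build a partial recursive function $\psi\colon\X\to\X$ that, on input $p$, effectively enumerates $E$ and adds the terms $2^{-\abs{s}/T}$ into a running lower approximation of $\CS_V(T)$ until it first strictly exceeds the rational $0.p$; then $\psi$ sets $n=\lfloor T\abs{p}-c_0\rfloor$ for a fixed constant $c_0$ and outputs the first element (in the ordering on $\X$) of $\{0,1\}^n$ not yet enumerated into $E$. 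If $c_0$ is chosen so that $2^{-\abs{p}}\le 2^{-n/T}$, then for $p=\rest{\CS_V(T)}{m}$ the residual mass $\CS_V(T)-0.p<2^{-m}$ is strictly less than the contribution $2^{-n/T}$ which any missing element of $S_V(n)$ would make, so $\psi(p)$ indeed lies in $\{0,1\}^n\setminus S_V(n)$; this difference is non-empty by the first half of Theorem~\ref{evalsHsln}. By optimality of $V$, the output $t$ satisfies $H(t)\ge n-O(1)$, while \eqref{Psi} gives $H(t)\le H(\rest{\CS_V(T)}{m})+O(1)$. Combining these with $n\ge Tm-O(1)$ yields $H(\rest{\CS_V(T)}{m})\ge Tm-O(1)$.

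For part (ii), I would reduce to Theorem~\ref{mfcs10} by showing that $\CS_V(T)$ is $T$-convergent. Enumerate $E=\{s_1,s_2,\dotsc\}$ and, using the computability of $T$, choose a computable rational sequence $\{a_k\}$ with $b_k-2^{-k}\le a_k\le b_k-2^{-k-1}$, where $b_k=\sum_{i\le k}2^{-\abs{s_i}/T}$. A direct check shows $a_k\nearrow\CS_V(T)$ strictly and $a_{k+1}-a_k\le 2^{-\abs{s_{k+1}}/T}+O(2^{-k})$. Since $T\in(0,1)$, the map $x\mapsto x^T$ is subadditive on $[0,\infty)$, so
\[
  \sum_k(a_{k+1}-a_k)^T \;\le\; \sum_k 2^{-\abs{s_{k+1}}}\;+\;O\!\left(\sum_k 2^{-kT}\right) \;\le\; \CS_V+O(1)\;<\;\infty,
\]
invoking the bound $\CS_V<\Omega_V<1$ already established in Section~\ref{new}. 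Hence $\CS_V(T)$ is $T$-convergent, and Theorem~\ref{mfcs10} gives strict $T$-compressibility.

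The main obstacle is the calibration in (i): the constant $c_0$ must be chosen so that simultaneously (a) the residual mass after the enumeration halts falls below $2^{-n/T}$, and (b) $n\ge Tm-O(1)$. The virtue of feeding the whole initial segment $\rest{\CS_V(T)}{m}$ into $\psi$ (rather than supplying $n$ and a separate approximation) is that no self-delimiting overhead is paid, so \eqref{Psi} produces the clean additive $O(1)$ bound needed to conclude $H(\rest{\CS_V(T)}{m})\ge Tm-O(1)$ rather than the weaker $Tm-O(\log m)$.
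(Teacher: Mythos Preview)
Your arguments for (i) and (ii) follow the paper's approach. For (ii) the paper simply observes $\sum_i(2^{-|s_i|/T})^T=\sum_i 2^{-|s_i|}=\CS_V<\infty$ and invokes Theorem~\ref{mfcs10}; your extra layer with the rational perturbations $a_k$ is a legitimate elaboration of the phrase ``it is easy to show'' (the raw partial sums $b_k$ need not be rational). For (i) the paper runs the same procedure: enumerate until the partial sum strictly exceeds $0.p$, then output a string of length $\lfloor T\,|p|\rfloor$ outside the enumerated set and apply \eqref{Psi}.

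There is one technical point you skip that the paper handles explicitly. You feed $p=\rest{\CS_V(T)}{m}$, the expansion with infinitely many \emph{zeros}, into $\psi$; but if $\CS_V(T)$ were a dyadic rational then $0.p=\CS_V(T)$ for all large $m$, the partial sums never strictly exceed $0.p$, and $\psi(p)$ is undefined. The paper avoids this circularity by first working with the expansion $\alpha_m$ having infinitely many \emph{ones} (so $0.\alpha_m<\CS_V(T)$ holds unconditionally), deriving $Tm\le H(\alpha_m)+O(1)$, concluding that $\CS_V(T)$ is non-computable and hence non-dyadic, and only then identifying $\alpha_m$ with $\rest{\CS_V(T)}{m}$.

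For (iii) you take a genuinely different and more direct route. The paper argues by contradiction: pick a computable $t\in(1,T]$, assume $\CS_V(t)<\infty$, rerun the argument of (i) to get weak Chaitin $t$-randomness, and contradict $t>1$ via \eqref{Hlabs}. Your counting estimate via $\#S_V(n)\ge 2^{n-H(n)-d}$ together with $H(n)=O(\log n)$ is shorter and avoids re-invoking the machinery of (i); on the other hand, the paper's argument does not need the quantitative lower bound on $\#S_V(n)$ from Theorem~\ref{evalsHsln}, only the elementary upper bound $S_V(n)\subsetneqq\{0,1\}^n$.
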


\begin{proof}
Let $V$ be an optimal prefix-free machine.
We first note that,
for every $s\in\X$,
$H_V(s)<\abs{s}$ if and only if
there exists $p\in\Dom V$ such that $V(p)=s$ and $\abs{p}<\abs{s}$.
Thus, the set $\{\,s\in\X\mid H_V(s)<\abs{s}\,\}$ is
r.e.~and, obviously, infinite.
Let $s_1,s_2,s_3, \dotsc$ be
a particular recursive enumeration of this set.

(i) Suppose that $T$ is a computable real and $0<T\le 1$.
Then, since $\CS_V(T)=\sum_{i=1}^{\infty} 2^{-\abs{s_i}/T}$,
it is easy to see that $\CS_V(T)$ is left-computable.

For each $n\in\N^+$,
let $\alpha_n$ be the first $n$ bits of
the base-two expansion of $\CS_V(T)$ with infinitely many ones.
Then,
since $0.\alpha_n<\CS_V(T)$ for every $n\in\N^+$,
$\sum_{i=1}^{\infty}2^{-\abs{s_i}}=\CS_V(T)$,
and $T$ is computable,
there exists a partial recursive function $\xi\colon\X\to \N^+$ such that,
for every $n\in\N^+$,
$0.\alpha_n < \sum_{i=1}^{\xi(\alpha_n)}2^{-\abs{s_i}/T}$.
It is then easy to see that
$\sum_{i=\xi(\alpha_n)+1}^{\infty}2^{-\abs{s_i}/T}
=\CS_V(T)-\sum_{i=1}^{\xi(\alpha_n)}2^{-\abs{s_i}/T}
<\CS_V(T)-0.\alpha_n
< 2^{-n}$
for every $n\in\N^+$.
It follows that,
for all $i>\xi(\alpha_n)$,
$2^{-\abs{s_i}/T}<2^{-n}$ and therefore $Tn<\abs{s_i}$.
Thus, given $\alpha_n$,
by calculating the set
$\{\,s_i\mid i\le \xi(\alpha_n)\;\&\;\abs{s_i}=\lfloor Tn\rfloor\,\}$
and picking any
one
finite binary string of length $\lfloor Tn\rfloor$ which is not in this set,
one can obtain $s\in\{0,1\}^{\lfloor Tn\rfloor}$ such that $\abs{s}\le H_V(s)$.
This is possible since
$\{\,s_i\mid i\le \xi(\alpha_n)\;\&\;\abs{s_i}=\lfloor Tn\rfloor\,\}
=S_V(\lfloor Tn\rfloor)
\subsetneqq \{0,1\}^{\lfloor Tn\rfloor}$,
where the last proper inclusion is due to the first half of Theorem~\ref{evalsHsln}.

Hence, there exists a partial recursive function $\Psi\colon\X\to\X$
such that
$\lfloor Tn\rfloor\le H_V(\Psi(\alpha_n))$.
Using the optimality of $V$,
we then see that $Tn\le H(\Psi(\alpha_n))+O(1)$ for all $n\in\N^+$.
On the other hand,
it follows from \eqref{Psi} that
there exists $c_\Psi\in\N$ such that
$H(\Psi(\alpha_n))\le H(\alpha_n)+c_\Psi$.
Therefore, we have
\vspace*{0mm}
\begin{equation}\label{nHao}
  Tn\le H(\alpha_n)+O(1)
\end{equation}
\vspace*{-5mm}\\
for all $n\in\N^+$.
This inequality implies that $\CS_V(T)$ is not computable and therefore
the base-two expansion of $\CS_V(T)$ with infinitely many ones has
infinitely many zeros also.
Hence $\alpha_n=\rest{\CS_V(T)}{n}$ for every $n\in\N^+$.
It follows from \eqref{nHao} that $\CS_V(T)$ is weakly Chaitin $T$-random.

(ii) Suppose that $T$ is a computable real and $0<T<1$.
Note that $\CS_V(T)=\sum_{i=1}^{\infty} 2^{-\abs{s_i}/T}$ and
$\sum_{i=1}^{\infty} (2^{-\abs{s_i}/T})^T
=\sum_{i=1}^{\infty} 2^{-\abs{s_i}}
=\CS_V
<\infty$.
Thus, since $T$ is computable, it is easy to show that
$\CS_V(T)$ is a $T$-convergent left-computable real.
It follows from Theorem~\ref{mfcs10} that
$\CS_V(T)$ is strictly $T$-compressible.

(iii)
Suppose that $T>1$.
We then choose a particular computable real $t$ satisfying $1<t\le T$.
Let us first assume that $\CS_V(t)$ converges.
Based on an argument similar to the proof of
Theorem \ref{pCSVT} (i),
it is easy to show that $\CS_V(t)$ is weakly Chaitin $t$-random,
i.e., there exists $c\in\N$ such that
$tn-c\le H(\rest{\CS_V(t)}{n})$ for all $n\in\N^+$.
It follows from \eqref{Hlabs} that $tn-c\le n+o(n)$.
Dividing by $n$ and letting $n\to\infty$ we have $t\le 1$,
which contradicts the fact $t>1$.
Thus, $\CS_V(t)$ diverges to $\infty$.
By noting $\CS_V(t)\le\CS_V(T)$ we see that $\CS_V(T)$
diverges to $\infty$.
\qed
\end{proof}

\vspace*{-3mm}

\section{Generalization of $\CS$ to $\CSb(T)$}
\label{gen2}

\vspace*{-1mm}

\begin{definition}
For any optimal prefix-free machine $V$ and any real $T>0$,
$\CSb_V(T)$ is defined
by
\begin{equation*}
  \CSb_V(T)=\sum_{H_V(s)<T\abs{s}}2^{-\abs{s}},
\end{equation*}
where the sum is over all $s\in\X$ such that $H_V(s)<T\abs{s}$.
\qed
\end{definition}

Thus,
$\CSb_V(1)=\CS_V$.
For each optimal prefix-free machine $V$ and each real $T$ with $0<T\le 1$,
we see that
\vspace*{-1mm}
\begin{equation*}
  \CSb_V(T)
  <\sum_{H_V(s)<T\abs{s}}2^{-\frac{H_V(s)}{T}}
  \le\sum_{s\in\X}2^{-\frac{H_V(s)}{T}}
  \le\sum_{p\in\Dom V}2^{-\frac{\abs{p}}{T}}
  =Z_V(T).
\end{equation*}
\vspace*{-5mm}\\
Thus, $\CSb_V(T)$ converges and $0<\CSb_V(T)<Z_V(T)$
for every optimal prefix-free machine $V$
and every real $T$ with $0<T\le 1$.
We define
$S_{V,T}(n)=\{\,s\in\X\mid\abs{s}=n\;\&\;H_V(s)<Tn\,\}$
for each optimal prefix-free machine $V$, each $T\in(0,1]$, and each $n\in\N$.
It follows from Theorem~\ref{evalsHsln} that
$S_{V,T}(n)\subset S_V(n)\subsetneqq \{0,1\}^n$
for every optimal prefix-free machine $V$, every $T\in(0,1]$, and every $n\in\N$.
The following theorem holds for $\CSb_V(T)$.

\begin{theorem}\label{pCSbVT}
Let $V$ be an optimal prefix-free machine, and let $T>0$.
\vspace*{-2mm}
\begin{enumerate}
  \item If $T$ is left-computable and $0<T\le 1$,
    then $\CSb_V(T)$ is a left-computable real
    which is weakly Chaitin $T$-random.
  \item If $T$ is computable and $0<T<1$,
    then $\CSb_V(T)$ is strictly $T$-compressible.
  \item If $1<T$, then $\CSb_V(T)$ diverges to $\infty$.
\end{enumerate}
\end{theorem}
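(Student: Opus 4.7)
The plan is to mirror the structure of the proof of Theorem~\ref{pCSVT}, adapting for the fact that here $T$ controls membership in the sum rather than the exponent of individual terms. For part~(i), I would first show that the set $E=\{\,s\in\X\mid H_V(s)<T\abs{s}\,\}$ is r.e.\ when $T$ is left-computable: approximating $T$ from below by a computable increasing sequence of rationals $T_k\nearrow T$, the strict inequality $H_V(s)<T\abs{s}$ is equivalent to the existence of some $k$ and some $V$-program $p$ of $s$ with $\abs{p}<T_k\abs{s}$, which can be enumerated by dovetailing. Fixing a recursive enumeration $s_1,s_2,s_3,\dotsc$ of $E$, the rational partial sums $\sum_{i=1}^{k}2^{-\abs{s_i}}$ are computable and increase to $\CSb_V(T)$, so $\CSb_V(T)$ is left-computable. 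Note that, unlike the $\CS_V(T)$ case, no knowledge of $T$ is needed to form the individual terms, which is why left-computability of $T$ (rather than full computability) already suffices here.

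For the weak Chaitin $T$-randomness in~(i), I would let $\alpha_n$ denote the first $n$ bits of the base-two expansion of $\CSb_V(T)$ with infinitely many ones and, exactly as in the proof of Theorem~\ref{pCSVT}~(i), obtain a partial recursive $\xi\colon\X\to\N^+$ with $0.\alpha_n<\sum_{i=1}^{\xi(\alpha_n)}2^{-\abs{s_i}}$. The tail $\sum_{i>\xi(\alpha_n)}2^{-\abs{s_i}}$ is then strictly less than $2^{-n}$, forcing $\abs{s_i}>n$ for every $i>\xi(\alpha_n)$. Consequently, $S_{V,T}(n)$ is entirely contained in $\{s_1,\dotsc,s_{\xi(\alpha_n)}\}$, so from $\alpha_n$ one can compute $S_{V,T}(n)$ and pick the first element $t$ of $\{0,1\}^n\setminus S_{V,T}(n)$, which is nonempty by the proper inclusion $S_{V,T}(n)\subsetneqq\{0,1\}^n$ observed immediately before the theorem statement. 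Such $t$ satisfies $H_V(t)\ge Tn$; the optimality of $V$ together with \eqref{Psi} then yields $Tn\le H(\alpha_n)+O(1)$, which simultaneously rules out computability of $\CSb_V(T)$ (hence $\alpha_n=\rest{\CSb_V(T)}{n}$) and delivers the desired weak Chaitin $T$-randomness.

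Part~(ii) should follow from $T$-convergence. Since $H_V(s_i)<T\abs{s_i}$, each term satisfies $(2^{-\abs{s_i}})^T=2^{-T\abs{s_i}}<2^{-H_V(s_i)}$, and $\sum_i 2^{-H_V(s_i)}\le\sum_{s\in\X}2^{-H_V(s)}\le\Omega_V<\infty$ by Kraft's inequality on $\Dom V$. The computable increasing sequence of partial sums thus witnesses $T$-convergence of $\CSb_V(T)$, and Theorem~\ref{mfcs10} yields strict $T$-compressibility. For part~(iii) with $T>1$, I would use \eqref{Hlabs} combined with the optimality of $V$ to obtain $H_V(s)\le\abs{s}+2\log_2\abs{s}+O(1)$, so that $H_V(s)<T\abs{s}$ holds for every $s$ of sufficiently large length; each such length $n$ then contributes a full $2^n\cdot 2^{-n}=1$ to $\CSb_V(T)$, forcing divergence. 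The main obstacle I anticipate is the bookkeeping in~(i), specifically verifying that the un-rescaled tail bound $2^{-n}$ truly captures \emph{all} of $S_{V,T}(n)$ within the enumerated prefix; this is in fact where the different structure of the series (with $T$ outside the exponent) simplifies matters compared with the $\CS_V(T)$ case, since one can target a random string of length exactly $n$ rather than $\lfloor Tn\rfloor$.
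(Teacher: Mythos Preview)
Your proposal is correct and follows essentially the same approach as the paper's own proof: the same r.e.\ enumeration of $\{s\mid H_V(s)<T\abs{s}\}$, the same tail-bound argument via $\xi(\alpha_n)$ to recover $S_{V,T}(n)$ and produce an incompressible string of length $n$, the same $T$-convergence argument via $\sum 2^{-T\abs{s_i}}<\sum 2^{-H_V(s_i)}\le\Omega_V$ for~(ii), and the same use of \eqref{Hlabs} (transferred to $H_V$ by optimality) for~(iii). Your commentary on why left-computability of $T$ suffices in~(i) and why one targets length $n$ rather than $\lfloor Tn\rfloor$ is accurate and matches the paper's reasoning.
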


\begin{proof}
Let $V$ be an optimal prefix-free machine.

(i) Suppose that $T$ is a left-computable real and $0<T\le 1$.
We first note that,
for every $s\in\X$,
$H_V(s)<T\abs{s}$ if and only if
there exists $p\in\Dom V$ such that $V(p)=s$ and $\abs{p}<T\abs{s}$.
Thus, since $T$ is left-computable,
the set $\{\,s\in\X\mid H_V(s)<T\abs{s}\,\}$ is
r.e.~and, obviously, infinite.
Let $s_1,s_2,s_3, \dotsc$ be
a particular recursive enumeration of this set.
Then, since $\CSb_V(T)=\sum_{i=1}^{\infty} 2^{-\abs{s_i}}$,
it is easy to see that $\CSb_V(T)$ is left-computable.

For each $n\in\N^+$,
let $\alpha_n$ be the first $n$ bits of
the base-two expansion of $\CSb_V(T)$ with infinitely many ones.
Then,
since $0.\alpha_n<\CSb_V(T)$ for every $n\in\N^+$ and
$\sum_{i=1}^{\infty}2^{-\abs{s_i}}=\CSb_V(T)$,
there exists a partial recursive function $\xi\colon\X\to \N^+$ such that,
for every $n\in\N^+$,
$0.\alpha_n < \sum_{i=1}^{\xi(\alpha_n)}2^{-\abs{s_i}}$.
It is then easy to see that
$\sum_{i=\xi(\alpha_n)+1}^{\infty}2^{-\abs{s_i}}
=\CSb_V(T)-\sum_{i=1}^{\xi(\alpha_n)}2^{-\abs{s_i}}
<\CSb_V(T)-0.\alpha_n
< 2^{-n}$
for every $n\in\N^+$.
It follows that,
for all $i>\xi(\alpha_n)$,
$2^{-\abs{s_i}}<2^{-n}$ and therefore $n<\abs{s_i}$.
Thus, given $\alpha_n$,
by calculating the set
$\{\,s_i\mid i\le \xi(\alpha_n)\;\&\;\abs{s_i}=n,\}$
and picking any
one
finite binary string of length $n$ which is not in this set,
one can obtain $s\in\{0,1\}^{n}$ such that $T\abs{s}\le H_V(s)$.
This is possible since
$\{\,s_i\mid i\le \xi(\alpha_n)\;\&\;\abs{s_i}=n\,\}
=S_{V,T}(n)
\subsetneqq \{0,1\}^{n}$.

Hence, there exists a partial recursive function $\Psi\colon\X\to\X$
such that
$Tn\le H_V(\Psi(\alpha_n))$.
Using the optimality of $V$,
we then see that $Tn\le H(\Psi(\alpha_n))+O(1)$ for all $n\in\N^+$.
On the other hand,
it follows from \eqref{Psi} that
there exists $c_\Psi\in\N$ such that
$H(\Psi(\alpha_n))\le H(\alpha_n)+c_\Psi$.
Therefore, we have
\begin{equation}\label{nHaob}
  Tn\le H(\alpha_n)+O(1)
\end{equation}
for all $n\in\N^+$.
This inequality implies that $\CSb_V(T)$ is not computable and therefore
the base-two expansion of $\CSb_V(T)$ with infinitely many ones has
infinitely many zeros also.
Hence $\alpha_n=\rest{\CSb_V(T)}{n}$ for every $n\in\N^+$.
It follows from \eqref{nHaob} that $\CSb_V(T)$ is weakly Chaitin $T$-random.

(ii) Suppose that $T$ is a computable real and $0<T<1$.
Note that
\begin{align*}
  \sum_{H_V(s)<T\abs{s}}(2^{-\abs{s}})^T
  &=\sum_{H_V(s)<T\abs{s}}2^{-T\abs{s}}
  <\sum_{H_V(s)<T\abs{s}}2^{-H_V(s)}\\
  &\le\sum_{s\in\X}2^{-H_V(s)}
  \le\sum_{p\in\Dom V}2^{-\abs{p}}
  =\Omega_V
  <\infty.
\end{align*}
Thus, since $T$ is computable, it is easy to show that
$\CSb_V(T)$ is a $T$-convergent left-computable real.
It follows from Theorem~\ref{mfcs10} that
$\CSb_V(T)$ is strictly $T$-compressible.

(iii)
Suppose that $T>1$.
Using \eqref{Hlabs},
it is easy to show that
there exists $n_0\in\N$ such that,
for every $s\in\X$,
if $\abs{s}\ge n_0$ then $H_V(s)<T\abs{s}$.
Thus, obviously, $\CSb_V(T)$ diverges to $\infty$.
\qed
\end{proof}

\vspace*{-4mm}

\section{Fixed Point Theorem on Partial Randomness by $\CS_V(T)$}
\label{fptgen1}

\vspace*{-1mm}

In this section,
we prove the following form of
fixed point theorem on partial randomness,
which is based on the computability of the value $\CS_V(T)$.
Note that this theorem has the same form as Theorem~\ref{fpt}.

\begin{theorem}[\boldmath
fixed point theorem on partial randomness
by $\CS_V(T)$]\label{main}
Let $V$ be an optimal prefix-free machine.
For every $T\in(0,1)$,
if $\CS_V(T)$ is computable,
then $T$ is weakly Chaitin $T$-random and $T$-compressible.
\qed
\end{theorem}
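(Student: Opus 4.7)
The plan is to adapt the proof of the fixed point theorem for $Z_V(T)$ (Theorem~\ref{fpt}), whose self-contained version due to Tadaki~\cite{T09FICS} is referenced in the excerpt, to the present setting. The argument splits into two parts, one for each conclusion of Theorem~\ref{main}: weak Chaitin $T$-randomness of $T$, and $T$-compressibility of $T$.

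For weak Chaitin $T$-randomness, I would mirror the construction in the proof of Theorem~\ref{pCSVT} (i), exchanging the roles of $T$ and $\CS_V(T)$: instead of using computable $T$ together with an approximation of $\CS_V(T)$, I would use the computable $\CS_V(T)$ together with the approximation $\rest{T}{n}$. Let $s_1,s_2,\dotsc$ be the r.e.~enumeration of $\{\,s\in\X\mid H_V(s)<\abs{s}\,\}$ used there, and set $T'_n=0.\rest{T}{n}+2^{-n}$, so that $T<T'_n\le T+2^{-n}$. The partial sums $c_k=\sum_{i\le k}2^{-\abs{s_i}/T'_n}$ are computable and converge to $\CS_V(T'_n)$, which by the local Lipschitz behaviour of $\CS_V$ on $(0,1)$ differs from $\CS_V(T)$ by $O(2^{-n})$. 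Since $\CS_V(T)$ is computable, I can effectively wait until $c_k>\CS_V(T)-2^{-n}$, forcing the tail bound $\sum_{i>k}2^{-\abs{s_i}/T'_n}=O(2^{-n})$ and hence $\abs{s_i}>T'_n n-O(1)\ge Tn-O(1)$ for every $i>k$. Then, exactly as in the proof of Theorem~\ref{pCSVT} (i), one selects $s\in\{0,1\}^{\ell}\setminus S_V(\ell)$ with $\ell=\lfloor Tn\rfloor-O(1)$ (possible by the first half of Theorem~\ref{evalsHsln}); this $s$ satisfies $H_V(s)\ge\abs{s}$, hence $H(s)\ge Tn-O(1)$ by optimality of $V$. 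As $s$ is the image of $\rest{T}{n}$ under a fixed partial recursive function, \eqref{Psi} gives $H(\rest{T}{n})\ge Tn-O(1)$.

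For $T$-compressibility I would proceed via $1-T$. First, $T$ is right-computable, since $\{\,q\in\Q\mid q>T\,\}=\{\,q\in\Q\mid \CS_V(q)>\CS_V(T)\,\}$ is r.e.~by the uniform left-computability of $\CS_V(q)$ at rational $q$ and the computability of $\CS_V(T)$; consequently $1-T$ is left-computable. Next I would construct a computable rational sequence $q_k\searrow T$ with $\sum(q_k-q_{k+1})^T<\infty$, establishing that $1-T$ is $T$-convergent: using the computable $k$-bit approximation of $\CS_V(T)$ in combination with lower approximations to $\CS_V(q)$ at rational $q$, one finds for each $k$ a rational $q_k>T$ whose image $\CS_V(q_k)$ is within $O(2^{-k})$ of $\CS_V(T)$, so that local Lipschitz continuity of $\CS_V$ forces $q_k-T=O(2^{-k})$. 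Invoking (the proof of) Theorem~\ref{mfcs10} then gives strict $T$-compressibility of $1-T$, and hence of $T$.

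The main obstacle is the $T$-compressibility part. Theorem~\ref{mfcs10} as stated presupposes that its parameter $T$ is computable, whereas Part~(i) rules out computability of $T$. So one has to revisit the proof of Theorem~\ref{mfcs10} in this regime, justifying that the required effective control on the rate $q_k-T=O(2^{-k})$ comes entirely from the computable approximations of $\CS_V(T)$ rather than from $T$ itself---exactly the delicate step that Tadaki's self-contained argument~\cite{T09FICS} handles for $Z_V(T)$ and that is to carry over to $\CS_V(T)$ with the obvious substitutions.
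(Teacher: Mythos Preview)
Your Part~(i) is essentially the paper's own argument (Theorem~\ref{fpwcTr}): the paper also passes to $T'_n=0.(\rest{T}{n})+2^{-n}$, uses a mean-value/Lipschitz estimate $Z_k(T'_n)-Z_k(T)\le 2^c(T'_n-T)$ to convert a tail bound on the partial sums at $T'_n$ into a tail bound at $T$, and then, exactly as in Theorem~\ref{pCSVT}~(i), extracts an incompressible string of length $\lfloor Tn\rfloor-O(1)$. (The paper needs only right-computability of $\CS_V(T)$ for this step; you use full computability, which is fine.) Your derivation of the right-computability of $T$ is also the paper's Theorem~\ref{fpTc1}.

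The $T$-compressibility part, however, has a genuine gap. You propose to build a \emph{computable} rational sequence $q_k\searrow T$ with $q_k-T=O(2^{-k})$; but any such sequence would make $T$ computable, contradicting the weak Chaitin $T$-randomness you have just established. Concretely, your procedure asks for $q_k>T$ with $\CS_V(q_k)-\CS_V(T)=O(2^{-k})$, yet from left-computability of $\CS_V(q)$ you can only certify \emph{lower} bounds on $\CS_V(q_k)$, never the required upper bound; and if you could, the bi-Lipschitz inversion you invoke would force $T$ to be computable. So the plan of showing that $1-T$ is $T$-convergent and then ``revisiting'' Theorem~\ref{mfcs10} cannot succeed as stated. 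The paper (Theorem~\ref{fpTc2}) proceeds quite differently: it does not try to approximate $T$ by a fast computable sequence. Instead, it uses the first $\lceil Tn\rceil$ bits of the random real $\CS_V=\CS_V(1)$ as an oracle. From $n$ and $\rest{\CS_V}{\lceil Tn\rceil}$ one finds $k_0$ with $\CS_V(T)-Z_{k_0}(T)<2^{-n}$; then, using the left-computability of $\CS_V(T)$, the right-computability of $T$, and the \emph{lower} Lipschitz bound $Z_{k_0}(t)-Z_{k_0}(T)>2^{-c}(t-T)$, one locates a rational within $2^{c-n}$ of $T$ and hence $\rest{T}{n}$ up to $O(1)$ candidates. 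This yields $H(\rest{T}{n})\le Tn+O(\log n)$, i.e.\ $T$-compressibility, without ever producing a computable fast approximation to $T$.
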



Let $V$ be an arbitrary optimal prefix-free machine in what follows.
Theorem~\ref{main} follows immediately from
Theorem~\ref{fpwcTr}, Theorem~\ref{fpTc1}, and Theorem~\ref{fpTc2} below,
as well as from Theorem~\ref{rfcomputable}~(i).
Let $s_1,s_2,s_3, \dotsc$ be
a particular recursive enumeration of
the infinite r.e.~set $\{\,s\in\X\mid H_V(s)<\abs{s}\,\}$.
For each $k\in\N^+$ and each real $x>0$,
we define $Z_k(x)$
as $\sum_{i=1}^{k} 2^{-\abs{s_i}/x}$.
Note then that $\lim_{k\to\infty}Z_k(x)=\CS_V(x)$ for every $x\in(0,1]$.

\begin{theorem}\label{fpwcTr}
For every $T\in(0,1)$,
if $\CS_V(T)$ is right-computable then $T$ is weakly Chaitin $T$-random.
\end{theorem}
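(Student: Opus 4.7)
The plan is to mirror the argument used in the proof of Theorem~\ref{pCSVT}(i) with the roles of $T$ and $\CS_V(T)$ interchanged. There, one used $\rest{\CS_V(T)}{n}$ together with the computability of $T$ to produce a string of length $\lfloor Tn\rfloor$ whose $H_V$-complexity is at least its length. Here I will use $\rest{T}{n}$ together with the assumed right-computability of $\CS_V(T)$ to produce, partial recursively in $\rest{T}{n}$, a string $s$ of length $Tn-O(1)$ with $H_V(s)\geq\abs{s}$. Combined with \eqref{Psi} and the optimality of $V$, this gives $Tn\leq H(\rest{T}{n})+O(1)$, namely weak Chaitin $T$-randomness of $T$.

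Let $g\colon\N^+\to\Q$ be a total recursive function with $g(l)\geq\CS_V(T)$ for all $l$ and $g(l)\to\CS_V(T)$. Given $\rest{T}{n}$, set $t_n=0.\rest{T}{n}$, so $t_n\leq T<t_n+2^{-n}$. The value $Z_k(x)$ is computable to arbitrary precision from rational $x>0$, and $Z_k(x)$ is increasing in $x$; hence $Z_k(t_n)\leq Z_k(T)$. Dovetail a search over pairs $(k,l)\in(\N^+)^2$ for one with
\[
  g(l)-Z_k(t_n)<2^{-(n-c)},
\]
where $c$ is a constant chosen below. For such $(k,l)$, $\CS_V(T)-Z_k(T)\leq g(l)-Z_k(t_n)<2^{-(n-c)}$, so every $i>k$ satisfies $2^{-\abs{s_i}/T}<2^{-(n-c)}$, i.e.\ $\abs{s_i}>T(n-c)$. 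Setting $m=\lfloor t_n(n-c)\rfloor\leq\lfloor T(n-c)\rfloor$, which is computable from $\rest{T}{n}$, this forces $S_V(m)=\{\,s_i\mid i\leq k\;\&\;\abs{s_i}=m\,\}$, and by the first half of Theorem~\ref{evalsHsln} we can pick some $s\in\{0,1\}^m\setminus S_V(m)$. Then $H_V(s)\geq m=\abs{s}$, so $H(s)\geq m-O(1)$ by optimality; since $m\geq Tn-O(1)$ for large $n$ and the map $\rest{T}{n}\mapsto s$ is partial recursive, \eqref{Psi} yields $H(\rest{T}{n})\geq H(s)-O(1)\geq Tn-O(1)$ for all large $n$, with finitely many small $n$ absorbed into the additive constant.

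The main obstacle is to ensure the dovetailed search terminates for all sufficiently large $n$, and for this I need the quantitative estimate $\CS_V(T)-\CS_V(t_n)=O(2^{-n})$. Writing $\beta_n=1/t_n-1/T\leq 2^{-n}/(Tt_n)$ and using $1-2^{-x}\leq x\ln 2$ for $x\geq 0$, one computes
\[
  \CS_V(T)-\CS_V(t_n)
  =\sum_i 2^{-\abs{s_i}/T}\bigl(1-2^{-\beta_n\abs{s_i}}\bigr)
  \leq(\ln 2)\,\beta_n\sum_i\abs{s_i}\,2^{-\abs{s_i}/T}.
\]
The crucial point is $T<1$: the series $\sum_i\abs{s_i}\,2^{-\abs{s_i}/T}$ is dominated by $\sum_{k\geq 1}k\cdot 2^{k}\cdot 2^{-k/T}$, which converges since the common ratio $2^{1-1/T}<1$. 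This yields $\CS_V(T)-\CS_V(t_n)\leq K(T)\cdot 2^{-n}$ for some constant $K(T)$ and all $n$ large enough that $t_n\geq T/2$, and choosing $c$ with $2^c>K(T)$ makes the search terminate for all such $n$, completing the proof.
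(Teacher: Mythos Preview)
Your proof is correct and follows essentially the same approach as the paper: both hinge on a Lipschitz-type estimate for $\CS_V(x)$ near $x=T$ (via the convergence of $\sum_i\abs{s_i}2^{-\abs{s_i}/T'}$ for $T'<1$), used together with the right-computability of $\CS_V(T)$ and an $n$-bit approximation of $T$ to locate $S_V(m)$ for some $m\approx Tn$ and extract an incompressible string. The only cosmetic difference is that the paper approximates $T$ from above by $0.(\rest{T}{n})+2^{-n}$ and applies the Lipschitz bound after the search to control the tail, whereas you approximate from below by $t_n=0.(\rest{T}{n})$ and use the Lipschitz bound to guarantee termination of the search; the ingredients and the structure of the argument are the same.
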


\begin{proof}
First,
we define $W_k(x)$
as $\sum_{i=1}^{k} \abs{s_i}2^{-\abs{s_i}/x}$
for each $k\in\N^+$ and each real $x>0$.
We show that,
for each $x\in(0,1)$,
$W_k(x)$ converges
as $k\to\infty$.
Let $x$ be an arbitrary real with $x\in(0,1)$.
Since $x<1$,
there is $l_0\in\N^+$ such that
$(\log_2 l)/l\le 1/x-1$
for all $l\ge l_0$.
Then there is $k_0\in\N^+$ such that $\abs{s_i}\ge l_0$ for all $i>k_0$.
Thus, we see that,
for each $i>k_0$,
\begin{equation*}
  \abs{s_i}2^{-\frac{\abs{s_i}}{x}}
  =2^{-(\frac{1}{x}-\frac{\log_2 \abs{s_i}}{\abs{s_i}})\abs{s_i}}
  \le 2^{-\abs{s_i}}.
\end{equation*}
Hence, for each $k>k_0$,
$W_{k}(x)-W_{k_0}(x)
=\sum_{i=k_0+1}^{k} \abs{s_i}2^{-\abs{s_i}/x}
\le \sum_{i=k_0+1}^{k} 2^{-\abs{s_i}}<\CS_V$.
Therefore,
since $\{W_{k}(x)\}_k$ is
an increasing sequence of reals bounded to the above,
it
converges
as $k\to\infty$,
as desired.
For each $x\in(0,1)$,
we define a positive real $W(x)$ as
$\lim_{k\to\infty}W_k(x)$.

On the other hand,
since $\CS_V(T)$ is right-computable by the assumption,
there exists a total recursive function $f\colon\N^+\to\Q$ such that
$\CS_V(T)\le f(m)$ for all $m\in\N^+$, and
$\lim_{m\to\infty} f(m)=\CS_V(T)$.

We choose a particular real $t$ with $T<t<1$.
Then,
for each $i\in\N^+$, using the mean value theorem we see that
\begin{equation*}
  2^{-\frac{\abs{s_i}}{x}}-2^{-\frac{\abs{s_i}}{T}}
  < \frac{\ln 2}{T^2}\abs{s_i}2^{-\frac{\abs{s_i}}{t}}(x-T)
\end{equation*}
for all $x\in(T,t)$.
We
then
choose a particular $c\in\N$ with $W(t)\ln 2/T^2\le 2^c$.
Here,
the limit value $W(t)$ exists,
since $0<t<1$.
It follows that
\begin{equation}\label{self-contained}
  Z_k(x)-Z_k(T)<2^c(x-T)
\end{equation}
for all $k\in\N^+$ and $x\in(T,t)$.
We also choose a particular $n_0\in\N^+$ such that
$0.(\rest{T}{n})+2^{-n}<t$ for all $n\ge n_0$.
Such $n_0$ exists since $T<t$ and $\lim_{n\to\infty} 0.(\rest{T}{n})+2^{-n}=T$.
Since $\rest{T}{n}$ is the first $n$ bits of
the base-two expansion of $T$ with infinitely many zeros,
we then see that $T<0.(\rest{T}{n})+2^{-n}<t$ for all $n\ge n_0$.
In addition,
we choose a particular $n_1\in\N^+$ such that
$(n-c)2^{-n}\le 1$ for all $n\ge n_1$.
For each $n\ge 1$,
since $\abs{T-0.(\rest{T}{n})}<2^{-n}$,
we see that that
$\abs{T(n-c)-0.(\rest{T}{n})(n-c)}<(n-c)2^{-n}\le 1$.
Hence, we have
\begin{equation}\label{Tnn-cTn-c}
  \lfloor 0.(\rest{T}{n})(n-c)\rfloor\le T(n-c)\quad\&\quad
  T(n-c)-2\le \lfloor 0.(\rest{T}{n})(n-c)\rfloor
\end{equation}
for every $n\ge n_1$.
We define $n_2=\max\{n_0,n_1,c+1\}$.

Now,
given $\rest{T}{n}$ with $n\ge n_2$,
one can find $k_0,m_0\in\N^+$ such that
$f(m_0)<Z_{k_0}(0.(\rest{T}{n})+2^{-n})$.
This is possible
from $Z(T)<Z(0.(\rest{T}{n})+2^{-n})$,
\begin{equation*}
  \lim_{k\to\infty}Z_k(0.(\rest{T}{n})+2^{-n})=Z(0.(\rest{T}{n})+2^{-n}),
\end{equation*}
and the properties of $f$.
It follows from $Z(T)\le f(m_0)$
and \eqref{self-contained} that
\begin{equation*}
  \sum_{i=k_0+1}^{\infty} 2^{-\abs{s_i}/T}
  =Z(T)-Z_{k_0}(T)
  <Z_{k_0}(0.(\rest{T}{n})+2^{-n})-Z_{k_0}(T)<2^{c-n}.
\end{equation*}
Hence,
for every $i>k_0$,
$2^{-\abs{s_{i}}/T}<2^{c-n}$
and therefore $T(n-c)<\abs{s_i}$.
Thus,
by calculating the set
$\{\,s_i\mid i\le k_0\;\&\;\abs{s_i}=\lfloor 0.(\rest{T}{n})(n-c)\rfloor\,\}$
and picking any
one
finite binary string of length $\lfloor 0.(\rest{T}{n})(n-c)\rfloor$ which is not in this set,
one can obtain $s\in\{0,1\}^{\lfloor 0.(\reste{T}{n})(n-c)\rfloor}$ such that $\abs{s}\le H_V(s)$.
This is possible since
$\{\,s_i\mid i\le k_0\;\&\;\abs{s_i}=\lfloor 0.(\rest{T}{n})(n-c)\rfloor,\}
=S_V(\lfloor 0.(\rest{T}{n})(n-c)\rfloor)\subsetneqq \{0,1\}^{\lfloor 0.(\reste{T}{n})(n-c)\rfloor}$,
where the first equality follows from the first inequality in \eqref{Tnn-cTn-c} and
the last proper inclusion is due to the first half of Theorem~\ref{evalsHsln}.

Hence, there exists a partial recursive function $\Psi\colon\X\to\X$
such that
$\lfloor 0.(\rest{T}{n})(n-c)\rfloor\le H(\Psi(\rest{T}{n}))$
for all $n\ge n_2$.
Using \eqref{Psi}, there is $c_\Psi\in\N$ such that
$H(\Psi(\rest{T}{n}))\le H(\rest{T}{n})+c_\Psi$
for all $n\ge n_2$.
Thus,
it follows from the second inequality in \eqref{Tnn-cTn-c}
that
$Tn-Tc-2-c_\Psi<H(\rest{T}{n})$
for all $n\ge n_2$,
which implies that $T$ is weakly Chaitin $T$-random.
\qed
\end{proof}


\begin{theorem}\label{fpTc1}
For every $T\in(0,1)$,
if $\CS_V(T)$ is right-computable,
then $T$ is also right-computable.
\end{theorem}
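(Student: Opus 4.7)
The plan is to invoke Theorem~\ref{rfcomputable}~(ii) and show that $\{\,r\in\Q\mid T<r\,\}$ is r.e. The key observation is that $x\mapsto \CS_V(x)$ is strictly increasing on $(0,1]$: for $0<y<x\le 1$ and each $i$, $2^{-\abs{s_i}/y}<2^{-\abs{s_i}/x}$, so $\CS_V(y)<\CS_V(x)$. Hence for any rational $r\in(0,1]$, the relation $T<r$ is equivalent to $\CS_V(T)<\CS_V(r)$.

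Let $f\colon\N^+\to\Q$ witness the right-computability of $\CS_V(T)$, so $\CS_V(T)\le f(m)$ with $\lim_{m\to\infty} f(m)=\CS_V(T)$. For rational $r\in(0,1)$ and $k\in\N^+$, the partial sum $Z_k(r)$ is a computable real uniformly in $(r,k)$, and hence the condition $Z_k(r)>f(m)$ is semi-decidable in $(r,k,m)$. The enumeration procedure lists every rational $r\ge 1$ outright (such $r$ trivially exceed $T$) and, in parallel, enumerates triples $(r,k,m)$ with $r\in\Q\cap(0,1)$, outputting $r$ whenever the test $Z_k(r)>f(m)$ is confirmed. For correctness: if $r\in(0,1)$ and $r\le T$, then
\begin{equation*}
  Z_k(r)<\CS_V(r)\le \CS_V(T)\le f(m)
\end{equation*}
for all $k,m$ (the first inequality is strict because infinitely many positive terms are omitted), so the test never fires. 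If instead $r\in(0,1)$ and $T<r$, then monotonicity gives $\CS_V(T)<\CS_V(r)=\lim_{k\to\infty} Z_k(r)$; picking $k$ so that $Z_k(r)$ exceeds $(\CS_V(T)+\CS_V(r))/2$ and then $m$ so that $f(m)$ falls below the same midpoint produces a witnessing pair, and $r$ is eventually enumerated.

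The argument is short, so the main work is just to recognise strict monotonicity of $\CS_V(\cdot)$ on $(0,1]$ together with the fact that an upper approximation $f(m)$ to $\CS_V(T)$ combined with the lower approximations $Z_k(r)$ to $\CS_V(r)$ already suffice to semi-decide strict inequality. In contrast to Theorem~\ref{fpwcTr}, no mean-value or derivative estimates are needed here, since we only wish to enumerate rationals above $T$, not to extract a quantitative randomness bound.
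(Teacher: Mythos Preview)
Your proof is correct and follows essentially the same route as the paper: both use strict monotonicity of $x\mapsto\CS_V(x)$ on $(0,1]$ together with the right-approximation $f(m)$ of $\CS_V(T)$ and the left-approximations $Z_k(r)$ of $\CS_V(r)$ to conclude that $T<r$ iff $f(m)<Z_k(r)$ for some $m,k$, then invoke Theorem~\ref{rfcomputable}~(ii). Your version is slightly more detailed (you explicitly handle rationals $r\ge 1$ and spell out the semi-decidability of the test), but the argument is the same.
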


\begin{proof}
Since $\CS_V(T)$ is right-computable,
there exists a total recursive function $f\colon\N^+\to\Q$ such that
$\CS_V(T)\le f(m)$ for all $m\in\N^+$, and
$\lim_{m\to\infty} f(m)=\CS_V(T)$.
Thus,
since $\CS_V(x)$ is an increasing function of $x\in(0,1]$,
we see that,
for every $x\in\Q$ with $0<x<1$,
$T<x$ if and only if there are $m,k\in\N^+$ such that
$f(m)<Z_k(x)$.
It follows from Theorem \ref{rfcomputable} (ii) that
$T$ is right-computable.
\qed
\end{proof}


\begin{theorem}\label{fpTc2}
For every $T\in(0,1)$,
if $\CS_V(T)$ is left-computable and $T$ is right-computable,
then $T$ is $T$-compressible.
\end{theorem}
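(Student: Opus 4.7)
The plan is to transfer $T$-compressibility from $\CS_V(T)$ to $T$ itself, exploiting the fact that $\CS_V$ is a strictly increasing $C^1$ function on $(0,1]$ with $\CS_V'(T)>0$. The argument has two main ingredients: (I) $\CS_V(T)$ is itself $T$-compressible; (II) $\rest{T}{n}$ is computable from $\rest{\CS_V(T)}{n+d}$ with overhead $O(\log n)$, for a constant $d$ depending on $T$.

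For (I), the identity $\sum_i(2^{-\abs{s_i}/T})^T=\sum_i 2^{-\abs{s_i}}=\CS_V<\infty$ exhibits $\CS_V(T)$ as a $T$-convergent real, independently of whether $T$ is computable. Theorem~\ref{mfcs10} assumes $T$ computable, but its proof uses computability of $T$ only to build a computable $T$-convergent approximation of the left-computable real. Here, I would construct such an approximation using the right-computable sequence $g(m)\searrow T$: define $\{b_k\}$ to be a monotonically non-decreasing computable sequence of partial sums of the form $Z_{k'}(g(m'))$, filtered by the given left-computable ceiling $\{h(m)\}$ so that $b_k\le\CS_V(T)$ for every $k$ while $b_k\nearrow\CS_V(T)$. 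The $T$-summability $\sum_k(b_{k+1}-b_k)^T<\infty$ then follows from the same identity. Running the coding argument of Theorem~\ref{mfcs10} on $\{b_k\}$ gives $H(\rest{\CS_V(T)}{n})\le Tn+o(n)$, the $o(n)$ slack (in place of the $O(1)$ of strict $T$-compressibility) reflecting the loss from approximating $T$ by $g$.

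For (II), since $\CS_V'$ is continuous and positive at $T$, there is a rational $c\in(0,\CS_V'(T))$ and an open neighbourhood $U$ of $T$ on which $\CS_V'(x)\ge c$. Set $d=\lceil\log_2(1/c)\rceil+2$ and $r=0.(\rest{\CS_V(T)}{n+d})$, so $\CS_V(T)-r<2^{-n-d}$. Dovetailing the enumeration of rationals $x>0$ together with the partial sums $Z_k(x)$ of $\CS_V(x)$, any rational $x$ satisfying $Z_k(x)>r+2^{-n-d}$ for some $k$ yields $\CS_V(x)>r+2^{-n-d}\ge\CS_V(T)$, hence $x>T$. By the mean value theorem, every $x$ with $x-T>2^{-n-d}/c$ is eventually certified this way. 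Consequently the infimum of certified rationals lies in $[T,T+2^{-n-1})$, and its first $n$ bits agree with $\rest{T}{n}$. A prefix-free machine with $r$ and $n$ as input produces $\rest{T}{n}$ by stabilising on this infimum; the right-computable upper bound $g(m)\searrow T$ serves as a halting clock---the machine can halt as soon as some $g(m)$ drops within $2^{-n-1}$ of the current infimum of certified rationals, converting the limiting procedure into a finite computation.

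Assembling the two ingredients, the program for $\rest{T}{n}$ is a prefix-free code for $n$ concatenated with the shortest $V$-program for $\rest{\CS_V(T)}{n+d}$; by (I) its total length is at most $T(n+d)+o(n)+O(\log n)=Tn+o(n)$, so $H(\rest{T}{n})\le Tn+o(n)$, proving $T$-compressibility of $T$. The main obstacle is ingredient (I): the explicit construction of the computable $T$-convergent approximation $\{b_k\}$ to $\CS_V(T)$ using only right-computable access to $T$, and the careful verification that the coding argument behind Theorem~\ref{mfcs10} carries through on this approximation with at most an $o(n)$ penalty in compressibility. Ingredient (II) is then a routine combination of the mean value theorem with the r.e.\ certification bookkeeping already developed in the proof of Theorem~\ref{fpwcTr}.
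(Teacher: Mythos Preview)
Your decomposition into (I) and (II) is natural, but ingredient (I) as you sketch it has a real gap. The identity $\sum_i(2^{-\abs{s_i}/T})^T=\CS_V<\infty$ shows that the sequence of \emph{true} partial sums $Z_k(T)$ is $T$-convergent; it says nothing about your filtered sequence $\{b_k\}$. The $b_k$ are values $Z_{k'}(g(m'))$ with $g(m')>T$, capped by the left-computable approximation to $\CS_V(T)$; consecutive differences $b_{k+1}-b_k$ bear no controlled relation to the increments $2^{-\abs{s_i}/T}$. If for instance you take $b_k=Z_k(g(m_k))$ with $m_k$ increasing, then $(b_{k+1}-b_k)^T\le 2^{-\abs{s_{k+1}}\,T/g(m_k)}$, and summability would require $T/g(m_k)\to 1$ fast enough. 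But $\sum_i 2^{-\abs{s_i}(1-\varepsilon)}$ already diverges for every fixed $\varepsilon>0$ (by Theorem~\ref{evalsHsln}, $\#S_V(n)=2^{n-H(n)+O(1)}$ with $H(n)=O(\log n)$, so the terms $2^{n\varepsilon-H(n)}$ do not even tend to zero), hence you would have to drive $g(m_k)-T$ below roughly $T/\abs{s_{k+1}}$---a precision you cannot certify with only a right-computable approximation to $T$. Adapting the coding behind Theorem~\ref{mfcs10} to non-computable $T$ with only an $o(n)$ penalty is a second, separate obligation you do not discharge.

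The paper sidesteps ingredient (I) entirely. Instead of compressing $\CS_V(T)$, it feeds in $\lceil Tn\rceil$ bits of $\CS_V$ itself (a fixed real, independent of $T$). From $\rest{\CS_V}{\lceil Tn\rceil}$ one finds $k_0$ with $\sum_{i>k_0}2^{-\abs{s_i}}<2^{-Tn}$; the elementary inequality $a^d+b^d\le(a+b)^d$ with $d=1/T\ge 1$ then yields $\CS_V(T)-Z_{k_0}(T)<2^{-n}$. Now the two one-sided hypotheses are used together: with $g(m)\nearrow\CS_V(T)$ and $f(l)\to T$ from above, one searches for $l_0,m_0$ with $Z_{k_0}(f(l_0))<g(m_0)$, and the mean-value lower bound $Z_{k_0}(t)-Z_{k_0}(T)>2^{-c}(t-T)$ forces $0<f(l_0)-T<2^{c-n}$. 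Thus $\rest{T}{n}$ is recovered from $n$, $\rest{\CS_V}{\lceil Tn\rceil}$, and $O(1)$ advice bits, giving $H(\rest{T}{n})\le Tn+O(\log n)$ directly. Your ingredient (II) is close in spirit to this last step; the decisive difference is that the paper anchors the argument on $\CS_V$ rather than on $\CS_V(T)$, so nothing like your ingredient (I) is needed.
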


\begin{proof}
%
For each $i\in\N^+$, using the mean value theorem we see that
\vspace*{-1mm}
\begin{equation*}
  2^{-\frac{\abs{s_1}}{t}}-2^{-\frac{\abs{s_1}}{T}}
  > (\ln 2)\abs{s_1}2^{-\frac{\abs{s_1}}{T}}(t-T)
\end{equation*}
\vspace*{-5mm}\\
for all $t\in(T,1)$.
We choose a particular $c\in\N^+$ such that
$(\ln 2)\abs{s_1}2^{-\frac{\abs{s_1}}{T}}\ge 2^{-c}$.
Then, it follows that
\vspace*{0mm}
\begin{equation}\label{self-contained2}
  Z_k(t)-Z_k(T)>2^{-c}(t-T)
\end{equation}
for all $k\in\N^+$ and $t\in(T,1)$.
\vspace*{-5mm}\\

Since $T$ is a right-computable real with $T<1$ by the assumption,
there exists a total recursive function $f\colon\N^+\to\Q$ such that
$T<f(l)< 1$ for all $l\in\N^+$, and
$\lim_{l\to\infty} f(l)=T$.
On the other hand,
since $\CS_V(T)$ is left-computable by the assumption,
there exists a total recursive function $g\colon\N^+\to\Q$ such that
$g(m)\le \CS_V(T)$ for all $m\in\N^+$, and
$\lim_{m\to\infty} g(m)=\CS_V(T)$.
By Theorem~\ref{CSVwCr}, $\CS_V$ is weakly Chaitin random
and therefore $\CS_V\notin\Q$.
Thus,
the base-two expansion of $\CS_V$ is unique and contains infinitely many ones,
and $0<\CS_V<1$ in particular.

Given $n$ and $\rest{\CS_V}{\lceil Tn\rceil}$
(i.e., the first $\lceil Tn\rceil$ bits of the base-two expansion of $\CS_V$),
one can find $k_0\in\N^+$ such that
$0.(\rest{\CS_V}{\lceil Tn\rceil})<
\sum_{i=1}^{k_0} 2^{-\abs{s_i}}$.
This is possible since
$0.(\rest{\CS_V}{\lceil Tn\rceil})<\CS_V$
and $\lim_{k\to\infty}\sum_{i=1}^{k} 2^{-\abs{s_i}}=\CS_V$.
It is then easy to see that
$\sum_{i=k_0+1}^{\infty} 2^{-\abs{s_i}}
=\CS_V-\sum_{i=1}^{k_0} 2^{-\abs{s_i}}
<2^{-\lceil Tn\rceil}\le 2^{-Tn}$.
Using the inequality $a^d+b^d\le (a+b)^d$
for any reals $a,b>0$ and $d\ge 1$,
it follows that
\vspace*{-1mm}
\begin{equation}\label{tcnew}
  \CS_V(T)-Z_{k_0}(T)
  =\sum_{i=k_0+1}^{\infty} 2^{-\frac{\abs{s_i}}{T}}
  <2^{-n}.
\end{equation}
\vspace*{-3mm}\\
Note that
$\lim_{l\to\infty} Z_{k_0}(f(l))=Z_{k_0}(T)$.
Thus, since $Z_{k_0}(T)<\CS_V(T)$,
one can then find $l_0, m_0\in\N^+$ such that
$Z_{k_0}(f(l_0))<g(m_0)$.
It follows from \eqref{tcnew} and \eqref{self-contained2}
that
$2^{-n}
>g(m_0)-Z_{k_0}(T)
>Z_{k_0}(f(l_0))-Z_{k_0}(T)
>2^{-c}(f(l_0)-T)$.
Thus, $0<f(l_0)-T<2^{c-n}$.
Let $t_n$ be the first $n$ bits of the base-two expansion of
the rational number $f(l_0)$ with infinitely many zeros.
Then, $\abs{\,f(l_0)-0.t_n\,}<2^{-n}$.
It follows from $\abs{\,T-0.(\rest{T}{n})\,}<2^{-n}$ that
$\abs{\,0.(\rest{T}{n})-0.t_n\,}<(2^c+2)2^{-n}$.
Hence, $\rest{T}{n}=t_n,\,t_n\pm 1,\,t_n\pm 2,\,\dots,\,t_n\pm (2^c+1)$,
where $\rest{T}{n}$ and $t_n$ are regarded as a dyadic integer.
Thus, there are still $2^{c+1}+3$ possibilities of $\rest{T}{n}$,
so that one needs only $c+2$ bits more in order to determine $\rest{T}{n}$.

Thus, there exists a partial recursive function
$\Phi\colon \N^+\times\X\times\X\to\X$ such that
\vspace*{-1mm}
\begin{equation}\label{PnCSVTnsTn}
  \forall\,n\in\N^+\quad\exists\,s\in\X\quad
  \abs{s}=c+2\;\;\&\;\;
  \Phi(n,\rest{\CS_V}{\lceil Tn\rceil},s)=\rest{T}{n}.
\end{equation}
\vspace*{-5mm}

Let us consider a prefix-free machine $D$ which
satisfies the following two conditions (i) and (ii):
(i) For each $p,q\in\Dom U$ and $v,s\in\X$,
$pqvs\in\Dom D$ if and only if $\abs{v}=U(q)$ and $\abs{s}=c+2$.
(ii) For each $p,q\in\Dom U$ and $v,s\in\X$ such that $\abs{v}=U(q)$ and $\abs{s}=c+2$,
$D(pqvs)=\Phi(U(p),v,s)$.
It is easy to see that such a prefix-free machine $D$ exists.
For each $n\in\N^+$,
note that $n=U(n^*)$ and $\abs{\rest{\CS_V}{\lceil Tn\rceil}}=U(\lceil Tn\rceil^*)$.
Thus, it follows from \eqref{PnCSVTnsTn} that
there exists $s\in\X$ with $\abs{s}=c+2$
such that
$D(n^*\lceil Tn\rceil^* \rest{\CS_V}{\lceil Tn\rceil} s)
=\Phi(n,\rest{\CS_V}{\lceil Tn\rceil},s)=\rest{T}{n}$.
Hence,
$H_D(\rest{T}{n})
\le \abs{n^*}+\abs{\lceil Tn\rceil^*}+\abs{\rest{\CS_V}{\lceil Tn\rceil}}+\abs{s}
=H(n)+H(\lceil Tn\rceil)+\lceil Tn\rceil+c+2$.
It follows from \eqref{Hlabs} that
$H_D(\rest{T}{n})\le Tn+2\log_2 n+2\log_2\log_2 n+O(1)$ for all $n\in\N^+$.
Using \eqref{minimal}
we see that $T$ is $T$-compressible.
\qed
\end{proof}

\vspace*{-4mm}

\subsubsection*{Acknowledgments.}

This work was supported
by KAKENHI, Grant-in-Aid for Scientific Research (C) (20540134),
by SCOPE
from the Ministry of Internal Affairs and Communications of Japan,
and by CREST from Japan Science and Technology Agency.

\vspace*{-2mm}


\end{document}